\title
{Superintegrability and Deformed Oscillator Realizations of Quantum TTW Hamiltonians on Constant-Curvature Manifolds and with Reflections in a Plane}
\author{Ian Marquette$^{1}$ and Anthony Parr$^{2}$}
\date{$^{1}$ Department of Mathematical and Physical Sciences, La Trobe University \\ Bendigo 3552, Victoria Australia \\ $^{2}$ School of Mathematics and Physics, The University of Queensland \\ Brisbane, QLD 4072, Australia}
\newcommand{\mrm}[1]{\mathrm{#1}}
\newcommand{\mbb}[1]{\mathbb{#1}}
\newcommand{\mcl}[1]{\mathcal{#1}}
\newcommand{\mfk}[1]{\mathfrak{#1}}
\DeclareMathOperator{\sgn}{sgn}
\DeclareMathOperator*{\Res}{Res}
\theoremstyle{theorem}
\newtheorem{theorem}{Theorem}[section]
\newtheorem{proposition}[theorem]{Proposition}
\newtheorem{corollary}[theorem]{Corollary}
\newtheorem{lemma}[theorem]{Lemma}
\numberwithin{equation} {section}
\begin{document}
\maketitle
\begin{abstract}
We extend the method for constructing symmetry operators of higher order for two-dimensional quantum Hamiltonians by Kalnins, Kress and Miller (2010). This expansion method expresses the integral in a finite power series in terms of lower degree integrals so as to exhibit it as a first-order differential operators. One advantage of this approach is that it does not require the \textit{a priori} knowledge of the explicit eigenfunctions of the Hamiltonian nor the action of their raising and lowering operators as in their recurrence approach (2011). We obtain insight into the two-dimensional Hamiltonians of radial oscillator type with general second-order differential operators for the angular variable. We then re-examine the Hamiltonian of Tremblay, Turbiner and Winternitz (2009) as well as a deformation discovered by Post, Vinet and Zhedanov (2011) which possesses reflection operators. We will extend the analysis to spaces of constant curvature. We present explicit formulas for the integrals and the symmetry algebra, the Casimir invariant and oscillator realizations with finite-dimensional irreps which fill a gap in the literature.
\end{abstract}

\noindent
PACS numbers: 03.65.Fd, 03.65.Ge
\\
email: i.marquette@latrobe.edu.au, anthony.parr@uq.net.au
\bigskip
\noindent

\tableofcontents

\section{Introduction}
Let  \(H\) be a quantum Hamiltonian on an \(N\)-dimensional Riemannian manifold:
\[H=-\tfrac{1}{2}\Delta+V(\mathbf{x})\]
\sloppy where $\Delta$ is the Laplace-Beltrami operator. An integral (of \(H\)) is a differential operator with meromorphic coefficients which commutes with \(H\). A family of integrals \(I_1,I_2,\ldots\) is independent if there is a Jordan polynomial identity \(P(I_1,I_2,\ldots)=0\) that resolves into a sum \(\sum_i Q_i(I_i)+\sum_{i<j}Q_{ij}(I_i,I_j)+\cdots\) of symmetric polynomials. The Hamiltonian is a superintegrable system  if there are more than \(N\) independent integrals (including the Hamiltonian) in which \(N\) of them pairwise commute. The system is of order \(d\) if the maximal order of the integrals, as differential operators, is \(d\). In classical mechanics, the number of independent integrals is at most \(2N-1\). This is typically the case for quantum systems also, and a model which reaches this limit in number of integrals is said to be maximally superintegrable. This family of independent integrals forms a generating set for a subspace of all the integrals of the Hamiltonian, called the symmetry algebra. The symmetry algebra contains elements of arbitrarily large order but is finitely generated. These algebras have steadily acquired their own status among the different algebraic structures of interest for application in mathematical physics or other areas, such as finite and infinite-dimensional Lie algebras and super Lie algebras. Repeated commutators of the generating integrals may produce additional integrals which are not independent. Thus, the symmetry algebra is a polynomial Lie algebra under the commutator operation \([\cdot,\cdot]\) with additional relations. These polynomials are of Jordan-type and cross-multiplication can be concisely expressed with the anticommutator \(\{\cdot,\cdot\}\). 

The exceptional properties of superintegrable systems have excited considerable interest in recent years, particularly their connection with separability 
and the algebraic approach to solving the Schr\"{o}dinger equation. 
The classification of second-order superintegrable systems on conformally flat two-dimensional space is well-understood \cite{mak67,win67,ev90, das01,post11a,mil13a, mil13b}, and current attention is directed towards finding and solving models in higher dimensions \cite{kal1, das10,es17, cap15, hoque15a, hoque15b,liao18,da19}  and order \cite{gra,mar08,tre10, post15,esc18a,esc18b, mar17, abou, mar19, rita,ritb}.

Some systems of third-, fourth- and fifth-order have been studied \cite{gra,tre10,esc18a, mar17}. Searching for potentials such that resulting Hamiltonian is superintegrable involves non-linear compatibility equations, so that higher order systems are difficult to study. Indeed, solutions for these modest orders yield potentials of Painlev\'{e}, Chazy class and even beyond. 

Several construction procedures have been developed for finding higher order integrals for a potential specified up to a few rational or integral parameters \cite{post15, esc18a, esc18b, miller10,bern20a, bern20b, cal4a, mar15, kal11, mar14}. Beyond a direct expansion in powers of the momenta \cite{post15,esc18a,esc18b}, we have for separable systems that the symmetry algebra corresponds to the dynamic algebra of a simpler Hamiltonian, usually taking the form of generalized Heisenberg algebras \cite{cal4a,mar15} but obtaining explicit forms is a non-trivial problem. In the quantum case, these algebras are built up from ladder operators and the recurrence relations \cite{kal11} of orthogonal polynomials and special functions from the Askey-Wilson scheme. 

The expansion of integrals as a finite power series of the Hamiltonian and other known integrals so as to represent the whole as a differential operator of reduced order has proved effective \cite{miller10,bern20a, bern20b}. This technique, which we shall designate the expansion method, has been applied to the resolution of third-order integrals on Riemannian manifolds and to obtain various exotic potentials which depend upon Painlev\'{e} transcendents. The expansion method is powerful in that no knowledge of the eigenfunctions are required. Its application to the resolution of higher order integrals \cite{miller10} was met with difficulties in producing explicit results, because the equations involved were of fourth degree.

It is our purpose here to apply the expansion and ladder operator methods to the construction of the symmetry algebra of certain separable Hamiltonians. We shall, throughout, work over a two-dimensional space of constant curvature \(\kappa\). Positive curvature corresponds to elliptical space, negative curvature to hyperbolic space and no curvature to flat Euclidean space. The Hamiltonian families of interest here are of radial harmonic oscillator type
\begin{equation}
H=\frac{1}{2}\bigg[-\partial_r^2-\frac{c}{s}\partial_r+\frac{M^2}{s^2}+\left(\omega^2-\tfrac{1}{4}\kappa^2\right)\frac{ s^2}{c^2}\bigg]\label{eq:hamiltonian}
\end{equation}
where: \(\omega>0\); \(M^2\) is a second-order differential operator which depends on the angular variable \(\theta\) and a rational parameter \(k\in\mbb{Q}_{>0}\); 
\begin{align*}
s&=\sum^\infty_{i=0}\frac{(-1)^i\kappa^ir^{2i+1}}{(2i+1)!}=\begin{cases}\frac{\sin(\sqrt{\kappa}r)}{\sqrt{\kappa}},&\quad \kappa >0,\\r,&\quad \kappa=0,\\\frac{\sinh(\sqrt{-\kappa}r)}{\sqrt{-\kappa}},&\quad \kappa <0,\end{cases}
\end{align*}
and 
\begin{align*}
c&=\sum^\infty_{i=0}\frac{(-1)^i\kappa^ir^{2i}}{(2i)!}=\begin{cases}\cos(\sqrt{\kappa}r),&\quad \kappa >0,\\1,&\quad \kappa =0,\\\cosh(\sqrt{-\kappa}r),&\quad \kappa < 0,\end{cases}
\end{align*}
are the trigonometric, flat or hyperbolic sine and cosine.The expansion method, as applied to this Hamiltonian, consists in writing
\[\partial_r^n=\sum_{\ell,m}[F_{0\ell mn}(s,c)+F_{1\ell mn}(s,c)\partial_r](M^2)^\ell H^m.\]
We may thus consolidate any differential operator in the variables \(r,\theta\) into the form
\begin{equation}
I=I_0(r,\theta,\partial_\theta,H)+I_1\bm{(}r,\theta,\partial_\theta,\rho(H)\bm{)}\partial_r\label{eq:Iab}
\end{equation}
where \(I_0\) and \(I_1\) are meromorphic in the first two arguments and polynomial in the last two, arranged so that the \(H\)-terms are ordered to the right of the variables and operators. Here, \(\rho\) denotes multiplication from the right, i.e. \(\rho(H)\partial_r=\partial_rH\). This can be extended to a large class of superintegrable systems.

The paper is organized in the following way. The main results is that we determine the full symmetry algebra which is not normally achieved through a straightforward application of the ladder operator method (see, e.g., Theorem \ref{kred}). In \S 2, we introduce the two classes of models which we are of focus for this paper, the TTW and PVZ models, and discuss the different forms that their symmetries will take. The approach we shall take in finding their integrals \(I\) is to expand as a power series in \(s\). In \S 3, we find that an infinite expansion corresponds to an infinite-order differential operator. The finite expansion enforces upon the operator the constraints that it is a linear expansion of formal operators \(L_i\) satisfying
\begin{equation}
[M^2,L_i]=4piL_i(pi+M)\label{eq:ladder}
\end{equation}
where \(k=p/q\) in reduced form. The ladder operators \(L_i\) decompose into a product of an angular operator \(J\) and radial operator \(K\), which we determine by similar expansion techniques in \S 4. In \S 5, the generators of the symmetry algebra are presented, their commutation and closure relations are given and finite-dimensional irreducible representations are derived. The latter will give us the spectrum of the Hamiltonian, obtained through purely algebraic means.
\section{Models, Eigenfunctions and Spectra}
In this section, we discuss the TTW and PVZ models which are of focus for this paper and give their spectral decomposition, which will be used for comparison with the representation theory of their symmetry algebras to be developed in \S 5. We shall discuss the differences between how their symmetry algebras will be constructed. 

The celebrated TTW model \cite{ttw09,ttw10} has the Posch-Teller operator as angular component:
\begin{equation}
M^2_{\text{TTW}}=-\partial_\theta^2+\frac{k^2}{4}\left[\frac{4\alpha^2-1}{\sin^2(k\theta)}+\frac{4\beta^2-1}{\cos^2(k\theta)}\right],\label{eq:m2ttw}
\end{equation}
where \(\alpha,\beta >-1\). The Hamiltonian \eqref{eq:hamiltonian} is separable in polar coordinates. Therefore the eigenfunctions can be determined analytically by simultaneously diagonalizing \(H\) and \(M^2\). The eigenfunctions of \eqref{eq:m2ttw} are
\[\Theta_\ell(\theta)=[\sin(k\theta)]^{\alpha+\frac{1}{2}}[\cos(k\theta)]^{\beta+\frac{1}{2}}\mathcal{P}_\ell^{\alpha,\beta}\bm{(}\cos(2k\theta)\bm{)},\qquad \ell=0,1,2,\ldots\]
where \(\mathcal{P}\) is a Jacobi polynomial. The corresponding eigenvalues are \(k^2(2\ell+\alpha+\beta+1)^2\). 

This model has attracted intensive interest \cite{que10a,que10b,cal12,gon12,cel12,
cha14,ran14,hak17,ran17} as a higher-order superintegrable system distinct from the anisotropic oscillator. The TTW model includes, for \(k=1,2,3\), the Smorodinsky-Winternitz model \cite{smod65}, the \(BC_2\) system of Type V in the classification of Olshanetsky and Perelomov \cite{per}, and the three-body model introduced by Wolfes \cite{wolfes} respectively. The TTW systems in their full generality were introduced in 2009 by Tremblay, Turbiner and Winternitz where they verified the model was superintegrable for \(k=4\) and conjectured superintegrability for all rational \(k\). 
Kalnins, Kress and Miller Jr. \cite{kal11} and Quesne \cite{que10b} proposed various approaches to demonstrating superintegrability. Studies of this model have generated new problem-solving techniques \cite{miller10, kal11} which have been applied to other models that are structurally very similar to the TTW model \cite{post12a}. 
The ladder operator method of Kalnins \textit{et al.} involved decomposing the integral into formal components \(L_{\pm 1}\) of the form \(\tfrac{1}{2}I_{\mathrm{I}}\pm \tfrac{1}{2}I_{\mathrm{II}}M\), \(I_{\mathrm{I}},I_{\mathrm{II}}\) being differential operators. Then \(I_{\mathrm{I}}=L_{1}+I_{-1}\), \(I_{\mathrm{II}}=(L_1-L_{-1})/M\) are integrals of \(H\). Even for \(k=1\), this does not yield the lowest degree operators. Indeed, there is a differential operator \(I_{\mathrm{III}}\) and a polynomial \(P\) such that
\begin{equation}
   I_{\mathrm{I}}-pI_{\mathrm{II}}=I_{\mathrm{III}}(M^2-p^2)+P(H).\label{eq:d3}
\end{equation}
The existence of the integral \(I_{\mathrm{III}}\) was inferred by Kalnins \textit{et al.} from algebraic considerations and computed an expression for \(P(H)\). That \(M^2,H,I_{\mathrm{III}},H\) generate by multiplication the full symmetry algebra for all \(k\) is not evident from their analysis. That this is the case we demonstrate in \S 5.3. Moreover, we derive equation \eqref{eq:d3} from the explicit differential operator form. We also provide an algebraic derivation of the spectrum and a generalization of the results for curved spaces. The classical analogue \cite{ttw10} is well-understood and indeed has been shown to be superintegrable on curved space \cite{ran14,ran11}.

We shall also consider a model discovered by Post, Vinet and Zhedanov \cite{post11b} (PVZ) which has angular momentum component
\begin{equation}
M^2_{\text{PVZ}}=-\partial_\theta^2+k^2\left[\frac{\alpha^2-2\alpha\cos(2k\theta)R}{\sin^2(2k\theta)}+\frac{\beta^2-2\beta\sin2(k\theta)}{\cos^2(2k\theta)}\right],\label{eq:m2_pvz}
\end{equation}
where \(\alpha,\beta >-1\). Here, \(R\) is a reflection operator on functions of \(\theta\), i.e. \((Rf)(\theta)=f(-\theta)\). The PVZ model is similar to the TTW model but exhibits its own peculiar properties. In particular, \(M^2\) is a perfect square:
\begin{equation}
M_{\text{PVZ}}=\left[\partial_\theta-\frac{k\beta}{\cos(2k\theta)}\right]R-\frac{k\alpha}{\sin(2k\theta)}.\label{eq:mpvz}
\end{equation}
The eigenfunctions of \eqref{eq:m2_pvz} are \[\Theta_\ell(\theta)=[\sin(2k\theta)]^{\frac{\alpha}{2}}[\cos(2k\theta)]^{\frac{\beta}{2}}[1+\sin(2k\theta)]^{\frac{1}{2}}\mathscr{P}_\ell\bm{(}\sin(2k\theta);\alpha,\beta|-1\bm{)},\qquad \ell=0,1,2,\ldots\]
where \(\mathscr{P}\) is a Little Jacobi polynomial \cite{little}. The eigenvalues of \eqref{eq:mpvz} are alternating in sign and are given by
\[(-1)^{\ell+1}k(2\ell+\alpha+\beta+1).\]
A symmetry algebra had been obtained for this system utilizing the ladder operator method. The ladder operators satisfy generalized Heisenberg algebra relations:
\begin{align*}
ML_1&=(-1)^qL_1(M+2p),\\
ML_{-1}&=(-1)^qL_1(M-2p),\\
[L_1,L_{-1}]&=P(M,H)
\end{align*}
for some polynomial \(P\). The algebra stated by Post \textit{et al.} was generated by \(M,H,L_1,L_{-1}\) for \(q\) even and \(M,H,L_2,L_{-2}\) for \(q\) odd. However, for \(q\) odd, we have the property that
\[L_1=\tilde{I}_\mathrm{I}(M+p)+\xi_+,\qquad L_{-1}=\tilde{I}_\mathrm{II}(M-p)+\xi_-\]
where \(\tilde{I}_\mathrm{I},\tilde{I}_\mathrm{II}\) are differential-difference operators and \(\xi_\pm\) are constants. Then the full symmetry algebra is given by \(M,H,\tilde{I}_\mathrm{I},\tilde{I}_{\mathrm{II}}\), which we show in detail in \S 5.2. The model for even \(q\) does not possess this property, and forms only a simple polynomial algebra satisfying the generalized Heisenberg relations.

The reduction properties associated with the TTW model and the PVZ model for \(q\) even, do not appear to necessarily be specific to these models alone but a consequence of the commutation identities associated with the ladder operators, which are common to a variety of systems. 

The eigenfunctions for both systems can be expressed as \(\Psi_{m,\ell}(r,\theta)=S_{m,\ell}(r)\Theta_\ell(\theta)\) where the radial part is found to be
\[S_{m,\ell}(r)=\begin{cases}s^{k(2\ell+\alpha+\beta+1)}c^{\frac{1}{2}+\frac{\omega}{|\kappa|}}\mathcal{P}^{k(2\ell+\alpha+\beta+1),\frac{\omega}{|\kappa|}}_m(c^2-\kappa s^2),&\quad \kappa\neq 0\\
r^{k(2\ell+\alpha+\beta+1)}\mathcal{L}^{k(2\ell+\alpha+\beta+1)}_m(\omega r^2),&\quad \kappa =0 
\end{cases}\qquad m=0,1,2,\ldots\]
and \(\mathcal{L}\) is a Laguerre polynomial. The spectrum is
\begin{equation}
E_{m,\ell}=\omega\epsilon_{m,\ell}+\tfrac{1}{2}\kappa \epsilon_{m,\ell}^2,\qquad \epsilon_{m,\ell}=2m+1+k(2\ell+\alpha+\beta+1).\label{eq:spectrum}
\end{equation}
The spectrum of \(H\) therefore splits into finite degeneracy levels \(\{E_{m,\ell}\}_{qm+p\ell=n}\). That is, the eigenvalues are invariant under the transformation \((m,\ell)\mapsto (m\pm p,\ell\mp q)\) which is the implicit evidence for integral operators being built up out of ladder operators, that raise or lower the quantum numbers of the eigenfunctions.
\section{General Form of the Symmetry}\label{sec3}
In this section, we demonstrate from the expansion method that every integral of the TTW and PVZ systems must be a linear combination of a product of ladder operators \(J^uK^{ku}\) where \(J^u=J^u(\theta,\partial_\theta,R,M)\) and \(K^{ku}=K^{ku}(r,\partial_r,M,H)\) which satisfy the commutation relations:
\begin{align}
\label{eq:jkrule}[M^2,J^u]&=4kuJ^u(ku+M),&[K^u,H]&=\frac{2u(u+M)}{s^2}K^u.
\end{align}
In the case of the TTW model, each equation is really two, with \(M\) being a formal operator. 
%

For an operator \(I\) consolidated in the manner of \eqref{eq:Iab}, we write 
\(I_{,r},I_{,\theta},\ldots\) as its partial derivatives with respect to just the variables \(r,\theta\) ignoring any of the operator terms. Then \([I,H]=0\) generates the following system:
\begin{subequations}
\begin{align}
\frac{[M^2,I_0]}{s^2}&=I_{0,r,r}+\frac{c}{s}I_{0,r}+\frac{2}{s^2}\rho_{M^2}I_{1,r}\label{eq:conabe1}-4I_{1,r}H\\
&\qquad+\frac{2s}{c^3}\left(\omega^2-\tfrac{1}{4}\kappa^2\right)(scI_{1,r}+I_1)-\frac{2c}{s^3}I_1M^2,\\
\frac{[M^2,I_1]}{s^2}&=I_{1,r,r}-\frac{c}{s}I_{1,r}+\frac{1}{s^2}I_1\label{eq:conabe2}+2I_{0,r},  
\end{align}
\end{subequations}
For \(\kappa=0\), taking a power series expansion in \(r\) to solve the system \eqref{eq:conabe1}--\eqref{eq:conabe2} naturally suggests itself. For general \(\kappa\), we find that the appropriate \textit{Ansatz} is:
\begin{align*}
I_0&=\sum_is^{-i}I_{0,i}(\theta,\partial_\theta,R,H),& I_1&=\sum_ics^{-i+1}I_{1,i}(\theta,\partial_\theta,R,H).
\end{align*}
We obtain the recurrence equations:
\begin{subequations}
\begin{align}
0&=\kappa(i+1)(i+2)I_{0,i+2}-[i^2-M^2+\rho(M^2)]I_{0,i}\label{eq:i0}\\
&\qquad+2(i+2)\left(\omega^2-\tfrac{1}{4}\kappa^2+2\kappa H\right)I_{1,i+4}\\
&\qquad-2(i+1)[2H+\kappa \rho(M^2)]I_{1,i+2}+2iI_{1,i}M^2,\\
0&=\kappa i(i+1)I_{1,i+2}-[i^2- M^2+\rho(M^2)]I_{1,i}+2iI_{0,i}.\label{eq:i1}
\end{align}
\end{subequations}
Combining \eqref{eq:i0} with \eqref{eq:i1} in order to eliminate the \((I_{0,i})\)-sequence, we get
\begin{align}
0&=i(i+2)[4\omega^2+8\kappa H-\kappa^2(i+2)^2]I_{1,i+4}\label{eq:aeq}\\
&\qquad+2i(i+1)\{\kappa[(i+1)^2+1-M^2-\rho(M^2)]-4H\}I_{1,i+2}\nonumber\\
&\qquad+\{4i^2\rho(M^2)-[i^2-M^2+\rho(M^2)]^2\}I_{1,i}
\end{align}
It is not possible for this sequence to admit infinitely many non-zero values without \(I\) having infinite degree in \(H,M^2\). Therefore, we must truncate the series.

If \(I_{1,m}\neq 0\) but \(I_{1,i}=0\) for \(i<m\), then equation \eqref{eq:aeq} says
\[[4\omega^2+8\kappa H-\kappa^2(m-2)^2] (m-2)(m-4)I_{1,m}=0,\]
so \(m\) is \(2\) or \(4\). Furthermore, \(I_{0,i}=I_{1,i}=0\) for odd \(i\).

In order to cut off the upper bound on the coefficients, we require that for some sufficiently large integer \(\mu\)
\[\{[\mu^2-M^2+\rho(M^2)]^2-4\mu^2\rho(M^2)\}I_{1,\mu}=0.\]
Equation \eqref{eq:aeq} gives
\[\prod^{\frac{1}{2}\mu}_{i=\frac{1}{2}j}\{[4i^2-M^2+\rho(M^2)]^2-16i^2\rho(M^2)]I_{1,j}=0\]
and, from \eqref{eq:i1}, we have
\[\prod_{i=1}^{\frac{1}{2}\mu}\{[4i^2-M^2+\rho(M^2)]^2-16i^2\rho(M^2)\}[M^2,I]=0\]
with \(\mu\) even. From here we wish to show \(I\) is a linear combination of operators \(L_i\) satisfying \eqref{eq:ladder} where \(i\) is any integer with magnitude less than or equal to \(\tfrac{1}{2}\mu\). Let \(Q_{i^2}\) satisfy
\begin{equation}
\bm{[}M^2,[M^2,Q_{i^2}]\bm{]}-8i^2\{M^2,Q_{i^2}\}+16i^4Q_{i^2}=0.\label{eq:comi}
\end{equation}
for a non-zero integer \(i\). Then
\[L_i\coloneqq \frac{1}{4i}[M^2,Q_{i^2}]-iQ_{i^2}+Q_{i^2}M\]
satisfies \eqref{eq:ladder}. Conversely, any \(L_i\) satisfying this commutation relation also satisfies \eqref{eq:comi} in place of \(Q_{i^2}\). Now, if
\[\bm{[}M^2,[M^2,L]\bm{]}-8i^2\{M^2,L\}+16i^2L=L_j\]
where \(i\neq j\), then \(L\) resolves into homogeneous solutions \(L_{\pm i}\) plus the particular solution
\[L_j[16(i^2-j^2)(i^2-j^2-2jM-M^2)]^{-1}.\]
This achieves the desired decomposition. It is evident that the \(\{L_i\}_i\) are linearly independent as operators with coefficients formally rational in \(M,H\). It follows that each \(L_i\) must commute with \(H\) also. Furthermore, there is a finite basis of operators \(L_i\) satisfying \eqref{eq:ladder} and we may choose a basis \(J^{i/p}(\theta,\partial_\theta,R,M)\) consisting of operators with no dependence upon \(r,\partial_r\). Then \(L_i=J^{i/p}K^i\) where \(K^i=K^i(r,\partial_r,M,H)\). These must satisfy \eqref{eq:jkrule}. The basis \(J^{i/p}\) can be further decomposed into operators satisfying either one of the equations
\begin{equation}
[M,J^u]=2ku J^u\qquad\text{or}\qquad \{M,J^u\}=-2kuJ^u.\label{eq:pvzJ}
\end{equation}
However, for the TTW model, this is merely an ambiguity on the sign of \(M\) acting on the eigenfunctions of \(M\). We take its action to always be of the same sign, i.e., we formally set \(MJ^u=J^u(M+2ku)\). The operators \(J,K\) modify respectively the quantum numbers \(\ell,m\) of the eigenfunctions of \(H\).
\section{Ladder Operators}
In the previous section we established that the integral must be built up from linear combinations of ladder operators. We shall in this section derive the ladder operators for the values of the exponents in which they exist. This is the integers. We consider the angular ladder operators \(J\) for the PVZ and TTW model and finally, the radial ladder operator \(K\) which is the same for both. The result of this section is that the form of the integral must be
\[I=\sum_iJ^{iq}K^{ip}\chi_i(M,H)\]
where we permit \(\chi_i\) to be rational functions insofar as \(I\) is a differential or differential-difference operator. Having calculated the ladder operators, we determine what forms of coefficients are acceptable, that being when their fractional residue in \(M,H\) (treated as formal operators) on the whole vanishes. The TTW model has the additional constraint that \(I\) must be even in \(M\).

We use the expansion method to simplify the angular partial derivatives. For the TTW model, we write
\[\partial_\theta^m=\sum_\ell[F_{2\ell m}\bm{(}\sin(k\theta),\cos(k\theta)\bm{)}+F_{3\ell m}\bm{(}\sin(k\theta),\cos(k\theta)\bm{)}\partial_\theta](M^2_{\text{TTW}})^\ell\]
in order to realize \(J^u\) as a first-order operator in the form
\begin{equation}
J^u(\theta,\partial_\theta,M)=J_{0}^u(\theta)(M)+J_1^u(\theta)(\rho_M) \partial_\theta.\label{eq:jnotettw}
\end{equation}
For the PVZ model, we do the same by writing
\[\partial_\theta^m=\sum_\ell[F_{4\ell m}\bm{(}\sin(2k\theta),\cos(2k\theta)\bm{)}+F_{5\ell m}\bm{(}\sin(2k\theta),\cos(2k\theta)\bm{)}R]M^\ell_{\text{PVZ}}\]
Hence, \(J^u\) can be written as
\begin{equation}
J^u(\theta,\partial_\theta,R)=J_0^u(\theta)(M)+J_1^u(\theta)(\rho_M)R.\label{eq:jnotepvz}
\end{equation}
We can employ an analogous reduction for \(K^u\):
\begin{equation}
K^u(r,\partial_r,M)=K^u_0(r,M,H)+K^u_1(r,M,H)\partial_r.\label{eq:knote}
\end{equation}
In \cite{kal11}, the reducibility of the integrals was induced from simple cases (i.e. \(k=1\)), but there was no way to verify that the results obtained were in fact of minimal degree. This is then an important problem: to develop systematically a means to obtain the lowest-order integrals. Here we will develop the method for the TTW and PVZ models on constant-curvature space based on the explicit results of \S\S 4.1--3.

For polynomials in \(M,H\), we consider their divisibility where \(M,H\) are taken to be formal operators. The greatest common divisors of a family of polynomials \(P=\{p_i(M,H)\}_i\) are the polynomials \(Q\) which divide \(P\) such that all the divisors of \(P\) are divisors of \(Q\). For a given representative \(q\in Q\), we shall write \(\gcd P\propto q\). The units are the non-zero real numbers. A formal differential operator of the form \(\sum_i\sum_{0\leq a,b\leq 1} f_{abi}(r,\theta)\partial_r^a\partial_\theta^bg_{abi}(M,H)\) where \(g_{abi}\) are polynomials has a gcd given by \(\gcd\{g_{abi}\}\). For the PVZ model, we substitute \(R\) for \(\partial_\theta\).
\subsection{PVZ Angular Operators}
We determine the coefficients in \eqref{eq:jnotepvz} by finding solutions to \eqref{eq:pvzJ} which are polynomials in \(M\). The resultant system is
\begin{subequations}
\begin{align}
{J_0^u}'(-\theta)&=\frac{k\alpha}{\sin(2k\theta)}[J_1^u(-\theta)-J_1^u(\theta)]\label{eq:pvzeq1}\\&\qquad-J_1^u(-\theta)M-vJ_1^u(\theta)(M+2ku);\\
{J_1^u}'(-\theta)&=\frac{k\alpha}{\sin(2k\theta)}[J_0^u(-\theta)-J_0^u(\theta)]-\frac{2k\beta}{\cos(2k\theta)}J_1^u(-\theta)\label{eq:pvzeq2}\\
&\qquad+J^u_0(-\theta) M-vJ^u_0(\theta)(M+2ku),
\end{align}
\end{subequations}
where \(v=\pm 1\) is to be determined.
To solve this differential-difference system, we develop the coefficients into a sinusoidal power series with a cosine gauge on \(J^u_1\):
\[J^u_0(\theta)(M)=\sum_{i=V}^U[\sin(2k\theta)]^iJ^u_{0,i}(M),\qquad J^u_1(\theta)(M)=\cos(2k\theta)\sum^{\widetilde{U}}_{i=\widetilde{V}}[\sin(2k\theta)]^iJ^u_{1,i}(M).\]
Substituting these series into \eqref{eq:pvzeq1}--\eqref{eq:pvzeq2}, we get the recurrence relations
\begin{subequations}
\begin{align}
0&=\{[(-1)^i-v]M/k-2uv\}J^u_{0,i}-[(-1)^i+1]\alpha J^u_{0,i+1}\label{eq:j1}
-2(-1)^i(i+1)J^u_{1,i+1}\\
&\qquad-2(-1)^i\beta J^u_{1,i}+2(-1)^iiJ^u_{1,i-1};\\
0&=2(-1)^i(i+1)J^u_{0,i+1}+\{[(-1)^i+v]M/k+2uv\}J^u_{1,i}\label{eq:j2}+[(-1)^i+1]\alpha J^u_{1,i+1}.
\end{align}
\end{subequations}
For finite-order in \(M\), the system \eqref{eq:j1}--\eqref{eq:j2} requires \(J^u_{0,i}=J^u_{1,i}=0\) for all but finite \(i\). The appropriate limits are \(U=|u|\), \(\widetilde{U}=|u|-1\), \(V=\widetilde{V}=0\). In particular,
\begin{align*}
0&=\{[(-1)^u-v]M/k-2uv\}J^u_{0,|u|}+2(-1)^u|u|J^u_{1,|u|-1},\\
0&=2(-1)^u|u|J^u_{0,|u|}+\{[(-1)^u-v]M/k-2uv\}J^u_{1,|u|-1},
\end{align*}
which, for \([J^u_{0,|u|}(M)]^2+[J^u_{1,|u|-1}(M)]^2\neq 0\), implies
\[\{[(-1)^u-v]M/k-2uv\}^2=4u^2\]
so \(v=(-1)^u\). Thus, \(\{M,J^u\}=-2kuJ^u\) when \(u\) is odd and \([M,J^u]=2kuJ^u\) when \(u\) is even. 
\noindent Up to proportionality the solution is unique, so we designate our principle solutions to be:
\begin{align}
J^{\pm 1}&=[\sin(2k\theta)\pm \cos(2k\theta)R]( M\pm k)+k(\alpha \pm \beta)\label{eq:j05}
\end{align}
and define the rest by \(J^{\pm 2j}=(J^{\mp 1}J^{\pm 1})^j\) and \(J^{\pm 2j\pm 1}=J^{\pm 1}J^{\pm 2j}=J^{\mp 2j}J^{\pm 1}\). In particular,
\begin{align}
J^{\pm 2}&=[\cos(4k\theta)\mp\sin(4k\theta)R](M\pm k)(M\pm 3k)\label{eq:jone}\\
&\qquad \mp k\sin(2k\theta)[k\alpha+\beta(M\pm 2k)]\pm k\cos(2k\theta)R[k\beta+\alpha(M\pm 2k)]\\
&\qquad+k^2(\alpha-\beta)(\alpha+\beta);
\end{align}
Note that in \eqref{eq:j05} but not in \eqref{eq:jone}, all the coefficients except the constant term are divisible by a linear factor in \(M\). This, in fact, occurs for every odd \(u\).
\begin{lemma}
For all odd integers \(u\), \(J^u(\theta)(-ku)=J^u_{0,0}(-ku)\).\label{half-int}
\end{lemma}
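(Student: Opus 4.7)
The plan is to evaluate the recurrence system \eqref{eq:j1}--\eqref{eq:j2} at $M = -ku$ and show that every coefficient other than $J^u_{0,0}(-ku)$ is forced to vanish. Since $u$ is odd we have $v = (-1)^u = -1$, and the substitution $M = -ku$, $v = -1$ produces striking cancellations: the coefficient $[(-1)^i - v]M/k - 2uv$ of $J^u_{0,i}$ in \eqref{eq:j1} collapses to $u(1 - (-1)^i)$, vanishing whenever $i$ is even, while the coefficient $[(-1)^i + v]M/k + 2uv$ of $J^u_{1,i}$ in \eqref{eq:j2} becomes $-u(1 + (-1)^i)$, vanishing whenever $i$ is odd.

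The first consequence is read off from \eqref{eq:j2} at odd $i$: it reduces to $-2(i+1) J^u_{0,i+1}(-ku) = 0$, so $J^u_{0,j}(-ku) = 0$ for every even $j$ in the range $2 \leq j \leq u-1$. For the remaining unknowns I would set $Y_j := J^u_{1,j}(-ku)$ and use \eqref{eq:j2} at even $i$ to express the odd-indexed $J^u_{0,j}(-ku)$ via $J^u_{0,i+1}(-ku) = (uY_i - \alpha Y_{i+1})/(i+1)$. Substituting this into \eqref{eq:j1} at even $i$ yields the two-term recurrence
\[[(i+1)^2 - \alpha^2]\, Y_{i+1} = i(i+1)\, Y_{i-1} - [\alpha u + (i+1)\beta]\, Y_i\]
(for even $i \geq 0$, with $Y_{-1} := 0$), which propagates $Y_0$ forward to determine every $Y_j$ linearly, and hence also the odd $J^u_{0,j}(-ku)$. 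The truncation $Y_u = 0$, together with the equations \eqref{eq:j1} at odd $i < u$, then over-determine the system, producing constraints of the form $p(\alpha,\beta,u)\, Y_0 = 0$.

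The main obstacle is showing that at least one such constraint polynomial is generically non-zero in $\alpha, \beta$, forcing $Y_0 = 0$ (after which a forward sweep makes every remaining component collapse to zero). I would handle this by invoking the uniqueness up to scaling of the polynomial solution to \eqref{eq:pvzeq1}--\eqref{eq:pvzeq2} established immediately before the lemma: the kernel of the specialized linear system at $M = -ku$ cannot exceed the one-dimensional kernel at generic $M$, and it visibly contains the vector whose only non-zero entry is $J^u_{0,0}(-ku)$. Non-degeneration of the kernel at $M = -ku$ thus compels $Y_0 = 0$ and hence the vanishing of every $Y_j$ and of $J^u_{0,j}(-ku)$ for $j \geq 1$, leaving precisely the stated identity. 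The concrete checks at $u = 1$ (where the forced relation is $(\alpha + \beta) Y_0 = 0$) and $u = 3$ (where it is $(\beta - \alpha)[(\alpha+\beta)^2 - 4]\, Y_0 = 0$) corroborate the generic non-degeneration and also suggest a factored form that could, if desired, be proved inductively on $u$ via the recursive definition $J^{\pm(2j+1)} = J^{\pm 1} J^{\pm 2j}$.
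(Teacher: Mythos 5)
Your reduction is sound as far as it goes: evaluating \eqref{eq:j1}--\eqref{eq:j2} at \(M=-ku\) with \(v=(-1)^u=-1\) does annihilate the coefficients you identify, the vanishing of \(J^u_{0,j}(-ku)\) for even \(j\ge 2\) follows, and the forward propagation from \(Y_0=J^u_{1,0}(-ku)\) together with the leftover boundary equations is a legitimate framework. But the step you yourself flag as the main obstacle is a genuine gap, and the proposed fix does not work. For a linear system whose matrix depends polynomially on a parameter, the kernel dimension is upper semicontinuous under specialization: at a special value such as \(M=-ku\) the kernel can only grow relative to generic \(M\), never shrink. So the one-dimensionality of the solution space at generic \(M\) gives no upper bound on the kernel at \(M=-ku\); the scenario the lemma must exclude is precisely that this specialized kernel is two-dimensional, spanned by the delta vector supported on \(J^u_{0,0}\) together with an independent specialization of the solution vector. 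Your argument assumes this cannot happen, which is essentially assuming the conclusion. The checks at \(u=1,3\) and the closing remark that the constraint polynomial ``could be proved inductively'' are where the real content would have to live, but that induction is not carried out, and the factorization \(J^{\pm(2j+1)}=J^{\pm1}J^{\pm2j}\) you cite is not the one that makes it work (the rightmost even factor shifts the evaluation point to \((2j-1)k\), which is not the distinguished value \(-k\) for \(J^{\pm 1}\)).

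For comparison, the paper's proof is a short induction on odd \(|u|\) using the sandwich factorization \(J^{\pm2j\pm1}=J^{\pm1}J^{\mp2j\pm1}J^{\pm1}\): since \(\{M,J^{\pm1}\}=\mp2kJ^{\pm1}\), the rightmost factor evaluated at \(M=\mp(2j+1)k\) passes the value \(\pm(2j-1)k\) to the middle factor, which is a constant by the induction hypothesis; what remains is \(J^{\pm1}J^{\pm1}=\Phi^{\pm1}\) evaluated at a number, hence a constant, giving \eqref{eq:lem41}. If you prefer to complete your recurrence route, you must actually establish that at least one boundary constraint polynomial \(p(\alpha,\beta)\) is not identically zero --- e.g.\ by computing it in closed form from the propagation, or by an induction on \(u\) that is genuinely executed --- and then handle the degenerate loci (such as \(\alpha^2=(i+1)^2\), where your division in the two-term recurrence fails) by polynomial continuation in \(\alpha,\beta\). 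As written, the proof is incomplete at its central step.
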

\begin{proof}
This is clearly true for \(|u|=1\) and we assume the lemma holds for \(|u|=2j-1\), where \(j\) is a positive integer. Using \(J^{\pm 2j\pm 1}=J^{\pm 1}J^{\mp 2j\pm 1}J^{\pm 1}\), we have
\begin{align}
J^{\pm 2j\pm 1}(\theta)\bm{(}\mp (2j+1)k\bm{)}
&=J^{\mp 2j\pm 1}_{0,0}\bm{(}\pm (2j-1)k\bm{)}\Phi^{\pm 1}\bm{(}\mp(2j+1)k\bm{)}\label{eq:lem41}
\end{align}
which is a constant.
\end{proof}
The combinations \((J^u)^2\) and \(J^{-u}J^u\) for \(u\) odd and even respectively, commute with \(M\). We designate them both by \(\Phi^u(M)\), and is appropriately called the structure function. By induction, we compute:
\begin{align}
\Phi^{\pm j}(M)&=\prod^{\lceil j/2\rceil-1}_{i=0}[k^2(\alpha\pm\beta)^2-(M\pm k\pm 4ki)^2]\prod^{\lfloor j/2\rfloor-1}_{i=0}[k^2(\alpha\mp \beta)^2-(M\pm 3k\pm 4ki)^2].\label{eq:oddj}
\end{align}
For \(u\) even, we have \(\Phi^{-u}(M)=\Phi^u(M-2ku)\) and \(\Phi^u(M)=\Phi^j(-M-2ku)\) for \(u\) odd. With knowledge of the structure function, we may now use equation \eqref{eq:lem41} to compute \(J^{\pm (2j+1)}_{0,0}\bm{(}\mp k(2j+1)\bm{)}\). The result of the computation, consolidated into a single formula, is:
\[J^u_{0,0}(-ku)=k^{|u|}\left[\alpha+(-1)^{\frac{u-1}{2}}\beta\right]\prod^{\frac{|u|-1}{2}}_{i=1}\left\{\left[\alpha+(-1)^{\frac{u-1}{2}-i}\beta\right]^2-4i^2\right\}\]
where \(u\) is any odd integer. By the lemma, it follows that
\begin{equation}
    [J^u-J^u_{0,0}(-ku)](M+ku)^{-1}\label{eq:jhalfdivisor}
    \end{equation}
is polynomial in \(M\). 
We now need to prove that there are no redundant factors in \(J^u\) generally. First, we compute some explicit terms in \(J^u\) so as to illustrate the phenomenon which we shall exploit. For a concise expression, we use
\[\exp(\theta R)=\cos(\theta)+\sin(\theta) R.\]
Then for \(j>1\):
\begin{align}
J^{\pm 2j\mp 1}&=\pm R\mrm{e}^{\mp 2(2j-1)k\theta R}\prod^{2j-1}_{i=1}[M\pm (2i-1)k]\label{eq:jhlf}\\
&\qquad+k(\alpha\pm\beta)\mrm{e}^{\pm 4(j-1)k\theta R}\prod^{2j-1}_{i=2}[M\pm (2i-1)k]\\
&\qquad+k(\alpha\pm\beta)\mrm{e}^{\mp 4(j-1)k\theta R}\prod^{2(j-1)}_{i=1}[M\pm (2i-1)k]\\&\qquad\pm k^2(\alpha\pm\beta)^2R\mrm{e}^{\pm 2(2j-3)k\theta R}\prod^{2(j-1)}_{i=2}[M\pm (2i-1)k]\\
&\qquad\pm 2(j-1)k^2(\alpha^2-\beta^2)R\mrm{e}^{\mp 2(2j-3)k\theta R}\prod^{2(j-1)}_{i=2}[M\pm (2i-1)k]+\cdots
\end{align}
\begin{align}
J^{\pm 2j}&=\mrm{e}^{\mp 4jk\theta R}\prod^{2j}_{i=1}[M\pm (2i-1)k]\label{eq:jful}\\&\qquad\pm k(\alpha\pm\beta)R\mrm{e}^{\pm 2(2j-1)k\theta R}\prod^{2j}_{i=2}[M\pm (2i-1)k]\\
&\qquad\pm k(\alpha\mp\beta)R\mrm{e}^{\mp 2(2j-1)k\theta R}\prod^{2j-1}_{i=1}[M\pm (2i-1)k]\\&\qquad+k^2(\alpha^2-\beta^2)\mrm{e}^{\pm 4(j-1)k\theta R}\prod^{2j-1}_{i=2}[M\pm (2i-1)k]\\
&\qquad+k^2(2j-1)(\alpha^2-\beta^2)\mrm{e}^{\mp 4(j-1)k\theta R}\prod^{2j-1}_{i=2}[M\pm (2i-1)k]+\cdots
\end{align}
where the ellipses indicates trigonometric functions with lower frequency. We observe in the above that the first few terms in \eqref{eq:jhlf} and \eqref{eq:jful} consist of products of \(M+(2i-1)k\) where the number of factors is decreasing along with the frequency of the \(\mrm{e}^{k\ell\theta R}\) terms which they multiply. In particular, the common divisor of the second and third terms is the same as that of the fourth and fifth terms. This is why \(J^u\) for \(u\) even cannot be reduced after removing the constant term, but \(J^u\) for \(u\) odd can be.
\begin{theorem}
Suppose \(\alpha+\beta,\alpha-\beta\) are not even integers. Then for all odd integer \(u\), \[\gcd\{J^u_{0,i},J^u_{1,i}\}_{i\geq 0}\propto 1\]
and for all non-zero even integer \(u\),
\[\gcd\{J^u_{0,i+1},J^u_{1,i}\}_{i\geq 0}\propto 1\]\label{jpvzred}
\end{theorem}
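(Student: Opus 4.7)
The plan is to argue by contradiction. Suppose some non-unit factor $(M - M_0)$ divides every element in the stated family of coefficients; then the substitution $M \mapsto M_0$ annihilates those coefficients, and $J^u|_{M = M_0}$ collapses to a scalar operator (equal to zero in the odd case, where $J^u_{0,0}$ is included in the family, and to $J^u_{0,0}(M_0)$ in the even case). The top coefficient $J^u_{0,|u|}$ is, by the highest trigonometric frequency in \eqref{eq:jhlf}--\eqref{eq:jful}, a non-zero constant multiple of $\prod_{i=1}^{|u|}(M+(2i-1)k)$, so the candidates for $M_0$ are confined to $-(2i_0-1)k$ with $i_0 \in \{1,\dots,|u|\}$, and the task reduces to exhibiting, for each such $i_0$, a surviving non-constant trigonometric mode in $J^u|_{M = M_0}$.

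For the extreme values $i_0 \in \{1,|u|\}$ the argument closes at the subleading frequency. The two terms of frequency $\mathrm{e}^{\pm 2(|u|-1)k\theta R}$ in \eqref{eq:jhlf}--\eqref{eq:jful} carry the products $\prod_{i=2}^{|u|}(M+(2i-1)k)$ and $\prod_{i=1}^{|u|-1}(M+(2i-1)k)$ with scalar coefficients proportional to $k(\alpha\pm\beta)$ and $k(\alpha\mp\beta)$ respectively. At $M_0=-k$ the first product survives while the second vanishes, and at $M_0=-(2|u|-1)k$ the roles reverse; the surviving scalar is non-zero by the hypothesis $\alpha\pm\beta \notin 2\mathbb{Z}$. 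By linear independence of trigonometric modes, the surviving term contributes a non-zero component of $J^u|_{M=M_0}$ at frequency $2(|u|-1)k$, contradicting the reduction to a scalar.

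For intermediate $i_0$ one descends through the ellipsis ``$\cdots$'' in \eqref{eq:jhlf}--\eqref{eq:jful} to lower-frequency pairs $\mathrm{e}^{\pm 2mk\theta R}$: each such pair carries products $\prod_{i=a}^{b}(M+(2i-1)k)$ over a range $[a,b]$ narrower than at the previous step, multiplied by $M$-independent scalars polynomial in $k(\alpha\pm\beta)$ and $k(\alpha\mp\beta)$. Since the range narrows by one unit on each side per decrement of $m$, for $m$ sufficiently small the range must fail to contain $i_0$, leaving a non-vanishing contribution at that frequency. The odd case carries one additional special value, $M_0 = -ku$ corresponding to $i_0 = (|u|+1)/2$: here Lemma \ref{half-int} together with \eqref{eq:jhalfdivisor} shows that $(M+ku)$ divides every element of the family except $J^u_{0,0}$, and the explicit evaluation of $J^u_{0,0}(-ku)$ stated immediately after the lemma is a product of factors $\alpha+(-1)^m\beta$ and $[\alpha+(-1)^m\beta]^2-4i^2$, all non-zero under the hypothesis, so $(M+ku)$ cannot divide the full family.

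The main obstacle is making the descent rigorous, since \eqref{eq:jhlf}--\eqref{eq:jful} exhibit only the first few terms explicitly. I would finish by induction on $|u|$ using the factorisations $J^{\pm 2j} = (J^{\mp 1}J^{\pm 1})^j$ and $J^{\pm 2j\pm 1} = J^{\pm 1}J^{\pm 2j}$: the base case $u=\pm 1$ is immediate from \eqref{eq:j05}, and the inductive step reduces to checking that left-multiplication by $J^{\pm 1}$ does not introduce a new common $M$-factor in the coefficient family. This last point follows because the two $M$-polynomial coefficients of $J^{\pm 1}$ itself, namely the constant $k(\alpha\pm\beta)$ and the linear $M\pm k$, are coprime precisely under the parameter hypothesis, so any common factor of the product coefficients would already have to divide one of them, forcing it to be a unit.
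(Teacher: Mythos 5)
Your reduction to the candidate roots $M_0=-\sgn(u)(2i_0-1)k$ via the top coefficient, and your treatment of the special root $M_0=-ku$ through Lemma \ref{half-int} and the explicit value of $J^u_{0,0}(-ku)$, both match the paper. But the two devices you offer for the intermediate roots each contain a genuine gap. The frequency descent through the ellipsis in \eqref{eq:jhlf}--\eqref{eq:jful} is, as you concede, not established: beyond the displayed terms you have no control over the ranges $[a,b]$ of the products, and the scalar multipliers at lower frequencies are polynomials in $\alpha,\beta$ that could themselves vanish, so ``the range must fail to contain $i_0$'' does not by itself produce a non-zero coefficient. More seriously, the inductive step you propose as the rigorous substitute rests on a false principle. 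The coefficients of $J^{\pm1}J^{v}$ are sums of products of \emph{shifted and reflected} coefficients of the two factors (the formal $M$ in $J^{\pm1}$ must be commuted past $J^v$), and a polynomial dividing all such sums need not divide any coefficient of either factor; coprimality of the coefficient family of $J^{\pm1}$ therefore does not prevent the product from acquiring a new common factor. The paper's own Lemma \ref{half-int} is essentially a counterexample to your claim: for odd $u$ with $|u|>1$ the factor $M+ku$ divides every non-constant coefficient of $J^u$ (this is \eqref{eq:jhalfdivisor}), yet it divides no coefficient of $J^{\pm1}$. What actually controls whether the composite can annihilate a formal $M$-eigenvector is the invertibility of $J^{\pm1}$ at the relevant eigenvalue, i.e.\ the non-vanishing of the structure function $\Phi^{\pm1}$ there, not the coprimality of its coefficients; your argument never invokes this.

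The paper closes the intermediate cases with a different and essential idea that your proposal lacks: for each candidate root $M_0=-\sgn(u)(2j+1)k$ one writes $J^u=J^{-u+\sgn(u)(2j+1)}J^{\sgn(u)(2j+1)}$ (and a mirror factorization covering the reflected evaluations $M_0=-2uk+\sgn(u)(2j+1)k$). The right-hand factor has odd exponent matched to $M_0$, so Lemma \ref{half-int} collapses it at $M_0$ to the scalar $J^{\sgn(u)(2j+1)}_{0,0}\bm{(}-\sgn(u)(2j+1)k\bm{)}$, which is non-zero precisely because $\alpha+\beta$ and $\alpha-\beta$ are not even integers; the question then reduces to the non-vanishing at $M_0$ of the \emph{leading} coefficient of the shorter left-hand factor, which is read off directly from \eqref{eq:jhlf}--\eqref{eq:jful} and involves no root of the form $M_0$. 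To repair your argument you would need to replace the coprimality step by this matched factorization (or by an explicit invertibility argument via $\Phi^{\pm1}$); as written, the inductive step does not go through.
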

\begin{proof}
From \eqref{eq:jhlf} and \eqref{eq:jful},
\[\gcd\{J^u_{0,|u|}(M),J^u_{0,|u|-1}(M)\}\propto\prod^{|u|}_{i=1}[M+\sgn(u)(2i-1)k]\]
so we need to show that there are non-zero terms in \(J^u\) when \(M\) is evaluated at each of the roots. By writing
\[J^u=J^{-u+\sgn(u)(2j+1)}J^{\sgn(u)(2j+1)},\qquad 0\leq j<|u|,\]
we have
\[J^u_{0,|u|-2j-1}\bm{(}-\sgn(u)(2j+1)k\bm{)}\neq 0\]
since \(J^{\sgn(u)(2j+1)}_{0,0}\bm{(}-\sgn(u)(2j+1)\bm{)}\neq 0\) by the hypothesis. Similarly, by writing
\[J^{u-\sgn(u)(2j+1)}J^{2u-\sgn(u)(2j+1)},\qquad 0\leq j<|u|\]
we get
\[J^u_{0,|u|-2j-1}\bm{(}-2uk+\sgn(u)(2j+1)k\bm{)}\neq 0.\]
The result follows.
\end{proof}
Theorem \ref{jpvzred} will be combined with similar results for the \(K\) operators to determine that only the factor \(M+ku\) will divide the non-constant terms of \(J^uK^{ku}\) for \(u\) odd but this is indivisible for \(u\) even.
\subsection{TTW Angular Operators}
\label{ttwangularoperatorsection}
Here we will solve equation \eqref{eq:jkrule} for \(M^2\) given by equation \eqref{eq:m2ttw}. Substituting \eqref{eq:jnotettw} into \eqref{eq:jkrule} produces the system
\begin{align*}
0&={J^u_0}''+u(u+\rho_M)J^u_0-2{J^u_1}'M^2+\frac{k^2(4\alpha^2-1)}{2\sin^2(k\theta)}[{J^u_1}'-k\cot(k\theta)J^u_1]\\
&\qquad+\frac{k^2(4\beta^2-1)}{2\cos^2(k\theta)}[{J^u_1}'+k\tan(k\theta)J^u_1],\\
0&={J^u_1}''+u(u+\rho_M)J^u_1+2{J^u_0}'.
\end{align*}
We develop the coefficients as a cosinusoidal power series
\begin{equation}
J^u_0(\theta)(M)=\sum^{U}_{i=V}[\cos(2k\theta)]^iJ^u_{0,i}(M),\qquad J^u_1=\sin(2k\theta)\sum^{\widetilde{U}}_{i=\widetilde{V}}[\cos(2k\theta)]^iJ^u_{1,i}(M)
\end{equation}
to obtain the recurrence relations:
\begin{subequations}
\begin{align}
0&=k(i+1)(i+2)J^u_{0,i+2}+[u(ku+M)-ki^2]J^u_{0,i}e\label{eq:ttwjeq2}\\&\qquad+(i+1)[M^2-k^2(2\alpha^2+2\beta^2-1)]J^u_{1,i+1}\nonumber\\
&\qquad+k^2(\beta^2-\alpha^2)(2i+1)J^u_{1,i}-M^2iJ^u_{1,i-1}\\
0&=k(i+1)(i+2)J^u_{1,i+2}+[u(ku+M)-k(i+1)^2]J^u_{1,i}\label{eq:ttwjeq1}\\&\qquad-(i+1)J^u_{0,i+1}.
\end{align}
\end{subequations}
Again, we require \(U,\widetilde{U},V,\widetilde{V}\) to be finite. The necessary limits are \(U=|u|\), \(\widetilde{U}=|u|-1\), \(V=\widetilde{V}=0\). We take as our principles
\[J^{\pm 1}=\sin(2k\theta)\partial_\theta(k\pm M)\pm \cos(2k\theta)M(k\pm M)+k^2(\alpha^2-\beta^2)\]
and set \(J^{\pm j}=(J^{\pm 1})^j\). From \(J^{-1}=J[1;-M]\) we get \(J^{-u}=J[u;-M]\). In particular, we obtain a differential operator with no formal part if we consider symmetric combinations:
\begin{equation}
J^uf(M)+J^{-u}f(-M)\label{eq:symcomb}
\end{equation}
where \(f\) is polynomial. We will need to determine, as with the PVZ model, to what extent \(f\) may be rational. Indeed, setting \(f(M)=M^{-1}\) in \eqref{eq:symcomb} leaves no residue.

For \(j>1\), we have:
\begin{align}
J^j&=[\sin(2jk\theta)\partial_\theta+\cos(2jk\theta)M]\prod^{2j-1}_{i=1}(M+ki) \label{eq:jgen}\\&\qquad+2k^2(\alpha^2-\beta^2)[(j-1)\sin[2(j-1)k\theta]\partial_\theta\\
&\qquad+\cos[2(j-1)k\theta] (jM+2jk-k)]\prod^{2j-2}_{i=2}(M+ki)+\cdots
\end{align}
where the ellipses denote trigonometric terms of lower frequency. The structure function \(\Phi^u(M)\) is defined as \(J^{-u}J^u\). For positive values, we determine:
\begin{align*}
\Phi^j(M)&=\prod^{j-1}_{i=0}[(M+ 2ki+ k)^2-k^2(\alpha+\beta)^2][(M+2ki+ k)^2-k^2(\alpha-\beta)^2].
\end{align*}
Observe \(\Phi^{-j}(M)=\Phi^j(-M)=\Phi^j(M-2jk)\).

The divisibility property which held for odd integers in \S 4.1 this time holds for all integers.
\begin{lemma}
For all non-zero integer \(u\), \(J^u(\theta)(-ku)=J^u_{0,0}(-ku)\).\label{thjttw}
\end{lemma}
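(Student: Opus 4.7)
The goal is to show that, at the formal substitution $M = -ku$, all dependence on $\theta$ and $\partial_\theta$ in $J^u$ disappears, leaving only the scalar $J^u_{0,0}(-ku)$; equivalently, every polynomial $J^u_{0,i}(M)$ for $i \ge 1$ and $J^u_{1,i}(M)$ for $i \ge 0$ admits $(M+ku)$ as a factor. My plan is to exploit the recurrence relations~\eqref{eq:ttwjeq2}--\eqref{eq:ttwjeq1}, which simplify drastically at $M = -ku$ since $u(ku+M)=0$ and $M^2 = k^2u^2$ becomes a scalar.

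Set $a_i \coloneqq J^u_{0,i}(-ku)$ and $b_i \coloneqq J^u_{1,i}(-ku)$. Two boundary facts follow at once. The expansion bounds $U = |u|$, $\widetilde{U} = |u|-1$ give $a_i = 0$ for $i > |u|$ and $b_i = 0$ for $i > |u|-1$. In addition, the top-frequency group $[\sin(2|u|k\theta)\partial_\theta + \cos(2|u|k\theta)M]\prod_{i=1}^{2|u|-1}(M+ki)$ in~\eqref{eq:jgen} is the sole contributor (via Chebyshev expansion) to $a_{|u|}$ and $b_{|u|-1}$, and its product contains the factor $(M+ku)$, so $a_{|u|} = b_{|u|-1} = 0$. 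At $M = -ku$, equation~\eqref{eq:ttwjeq1} yields $a_{i+1} = k(i+2)b_{i+2} - k(i+1)b_i$; substituting this into~\eqref{eq:ttwjeq2} collapses the pair of recurrences into a single four-term recurrence for the $b_i$,
\[
0 = k^2(i+1)(i+2)(i+3)\,b_{i+3} + k^2(i+1)\chi_i\,b_{i+1} + k^2(\beta^2-\alpha^2)(2i+1)\,b_i + k^2\,i(i-u)(i+u)\,b_{i-1},
\]
with $\chi_i = u^2 + 1 - 2\alpha^2 - 2\beta^2 - i^2 - (i+2)^2$.

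The decisive feature is the factorisation $k^2 i(i-u)(i+u)$ of the coefficient of $b_{i-1}$, which is nonzero for every $i$ in the range $1 \le i \le |u|-1$. Running the recurrence downward from $i = |u|-1$ (where $b_{i+3}$, $b_{i+1}$, and $b_i$ all vanish) forces $b_{|u|-2} = 0$, and proceeding iteratively gives $b_i = 0$ for all $0 \le i \le |u|-1$. The relation $a_{i+1} = k(i+2)b_{i+2} - k(i+1)b_i$ then yields $a_i = 0$ for $i \ge 1$, leaving only $a_0 = J^u_{0,0}(-ku)$. The main obstacle is the algebraic identity responsible for this clean factorisation: the $k^2 i^3 b_{i-1}$ contribution that arises from substituting $a_i$ into the $-ki^2 a_i$ term of~\eqref{eq:ttwjeq2} must combine with the existing $-k^2 u^2 i\, b_{i-1}$ to give exactly $k^2 i(i^2 - u^2)$. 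Unlike Lemma~\ref{half-int}, where the compact triple-product identity $J^{\pm 2j\pm 1} = J^{\pm 1} J^{\mp 2j \pm 1} J^{\pm 1}$ drove the induction, here it is the structure of the recurrence itself that forces the reduction.
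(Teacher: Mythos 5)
Your proof is correct and takes essentially the same route as the paper's: a downward induction on the recurrences \eqref{eq:ttwjeq2}--\eqref{eq:ttwjeq1} evaluated at \(M=-ku\), with the conclusion resting on the non-vanishing of \(i(i^2-u^2)\) for \(0<i<|u|\), which is exactly the determinant of the \(2\times 2\) system the paper inducts on. The only difference is cosmetic: you first eliminate the \(J^u_{0,i}(-ku)\) via \eqref{eq:ttwjeq1} to obtain a single four-term recurrence for the \(J^u_{1,i}(-ku)\), whereas the paper inducts on the coupled pair \(\bm{(}J^u_{0,i}(-ku),J^u_{1,i-1}(-ku)\bm{)}\) directly.
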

\begin{proof}
We know that \(J^u_{0,|u|}(-ku)=J^u_{1,|u|-1}(-ku)=0\). Now, suppose that \(J^u_{0,j}(-ku)=J^u_{1,j-1}(-ku)=0\) for all \(j>i\) where \(0<i<|u|\). Equations \eqref{eq:ttwjeq1} and \eqref{eq:ttwjeq2} simplify to \begin{align*}
0&=-ki^2J^u_{0,i}(-ku)-k^2u^2iJ^u_{0,i-1}(-ku);\\
0&=-ki^2J^u_{1,i-1}(-ku)-iJ^u_{0,i}(-ku).\\
\end{align*}
This system is non-singular so \(J^u_{0,i}(-ku)=J^u_{1,i-1}(-ku)=0\) and therefore the result holds by induction.
\end{proof}
\noindent Consequently,
\[[J^{\pm u}-J^{\pm u}_{0,0}(-ku)](M\pm ku)^{-1}\]
is of the form \(\pm D_0+D_1M\) where \(D_0,D_1\) are differential operators. 
We wish to find an explicit formula for \(J^u_{0,0}(-ku)\). To this end, we express the structure function in terms of the coefficients:
\begin{align*}
\Phi^j(M)&=J^j_{0,0}(-M-2jk)J^j_{0,0}(M) -2kJ^j_{1,0}(-M-2jk)J^j_{0,1}(M)\\
&\qquad+[k^2(2\alpha^2+2\beta^2-1)-M^2]J^j_{1,0}(-M-2jk)J^j_{1,0}(M).
\end{align*}
By Lemma \ref{thjttw}
\begin{equation}\Phi^j(-jk)=[J^j_{0,0}(-jk)]^2\label{eq:struc}
\end{equation}
This gives \(J^j_{0,0}(-jk)\) up to a sign. By \(J^j=J^1J^{j-1}\), we have:
\begin{align*}
J^j_{0,0}(M)&=J^1_{0,0}\bm{(}M+2k(j-1)\bm{)}J^{j-1}_{0,0}(M)-2kJ^1_{1,0}\bm{(}M+2k(j-1)\bm{)}J^{j-1}_{0,1}(M)\\
&\qquad+[k^2(2\alpha^2+2\beta^2-1)-M^2]J^1_{1,0}\bm{(}M+2k(j-1)\bm{)}J^{j-1}_{1,0}(M),\\
J^j_{0,1}(M)&=J^1_{0,1}\bm{(}M+2k(j-1)\bm{)}J^{j-1}_{0,0}(M)+J^1_{0,0}\bm{(}M+2k(j-1)\bm{)}J^{j-1}_{0,1}(M)\\
&\qquad+2k^2(\alpha^2-\beta^2)J^1_{1,0}\bm{(}M+2k(j-1)\bm{)}J^{j-1}_{1,0}(M)\\
&\qquad+[k^2(2\alpha^2+2\beta^2-1)-M^2]J^1_{1,0}\bm{(}M+2k(j-1)\bm{)}J^{j-1}_{1,1}(M),\\
J^j_{1,0}(M)&=J^1_{1,0}\bm{(}M+2k(j-1)\bm{)}J^{j-1}_{0,0}(M)+J^1_{0,0}\bm{(}M+2k(j-1)\bm{)}J^{j-1}_{1,0}(M)\\
&\qquad-2kJ^1_{1,0}\bm{(}M+2k(j-1)\bm{)}J^{j-1}_{1,1}(M),\\
J^j_{1,1}(M)&=J^1_{0,0}\bm{(}M+2k(j-1)\bm{)}J^{j-1}_{1,1}(M)+2kJ^1_{1,0}\bm{(}M+2k(j-1)\bm{)}J^{j-1}_{1,0}(M)\\
&\qquad+J^1_{0,1}\bm{(}M+2k(j-1)\bm{)}J^{j-1}_{1,0}(M)+J^1_{1,0}\bm{(}M+2k(j-1)\bm{)}J^{j-1}_{0,1}(M).
\end{align*}
This indicates that in powers of \(\alpha\):
\begin{align*}
J^j_{0,0}&=k^{2j}\alpha^{2j}+\mcl{O}(\alpha^{2j-2});& J^j_{0,1} & =\mcl{O}(\alpha^{2j-2}); \\ J^j_{1,0}&=\mcl{O}(\alpha^{2j-2}); & J^j_{1,1}&=\mcl{O}(\alpha^{2j-2});
\end{align*}
for all \(j\geq 1\). This enables us to choose the correct square-root in \eqref{eq:struc}. At last, we calculate
\begin{align}
J^j_{0,0}(-jk) &=k^{2j}(\alpha^2-\beta^2)^{\frac{1-(-1)^j}{2}}\prod^{\lfloor j/2\rfloor-1}_{i=0}\{[(2i-j+1)^2-\alpha^2-\beta^2]^2-4\alpha^2\beta^2\}.\label{eq:cres}
\end{align}
The result below shows that the coefficients, except in extraordinary cases, have no common factor.
\begin{theorem}
Suppose that \(\alpha+\beta,\alpha-\beta\) are not integers. Then
\(\gcd\{J^u_{0,i},J^u_{1,i}\}_{i\geq 0}\propto 1.\)\label{jttwred}
\end{theorem}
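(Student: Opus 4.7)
The plan is to adapt the argument of Theorem \ref{jpvzred} to the TTW setting, but the analogue of its case-by-case factorization can be replaced by a single cleaner step via the structure function $\Phi^u$. Since $J^{-u}(\theta)(M) = J^u(\theta)(-M)$ from $J^{-1} = J[1;-M]$, the gcd statement for $u<0$ follows from the one for $|u|$ under $M\mapsto -M$, so I may assume $u>0$.

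First, I would extract the leading coefficients from \eqref{eq:jgen}: expanding $\cos(2uk\theta)$ and $\sin(2uk\theta)$ in Chebyshev polynomials in $\cos(2k\theta)$ gives
\[J^u_{0,u}(M) = 2^{u-1} M \prod_{i=1}^{2u-1}(M+ki), \qquad J^u_{1,u-1}(M) = 2^{u-1}\prod_{i=1}^{2u-1}(M+ki),\]
since only the top-frequency terms in \eqref{eq:jgen} contribute to the top power of $\cos(2k\theta)$. As $M$ does not divide $\prod_{i=1}^{2u-1}(M+ki)$, we conclude that $\gcd\{J^u_{0,i}, J^u_{1,i}\}_{i\geq 0}$ divides $\prod_{i=1}^{2u-1}(M+ki)$, and it remains to rule out each linear factor $(M+ki)$ for $i\in\{1,\ldots,2u-1\}$.

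The key step is the observation that if $(M+ki)$ divided every coefficient of $J^u$, then $J^u(\theta)(-ki)$ would vanish identically as a $\theta$-operator; hence, from the composition identity
\[J^{-u}(\theta)(m+2ku)\cdot J^u(\theta)(m) = \Phi^u(m),\]
one would obtain $\Phi^u(-ki) = 0$. But the explicit formula derived earlier yields
\[\Phi^u(-ki) = \prod_{\ell=0}^{u-1}k^2\bigl[(2\ell+1-i)^2 - (\alpha+\beta)^2\bigr]\bigl[(2\ell+1-i)^2 - (\alpha-\beta)^2\bigr],\]
which is non-zero under the hypothesis: each factor has the form $k^2[n^2 - (\alpha\pm\beta)^2]$ with integer $n = 2\ell+1-i$, non-vanishing precisely when $\alpha\pm\beta\notin\mathbb{Z}$. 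Hence $J^u(\theta)(-ki) \neq 0$ and $(M+ki)$ cannot divide the gcd. This exhausts the possible factors, so the gcd is a unit.

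The main obstacle, such as it is, is recognizing the structure function as the right lever; the alternative of a direct factorization $J^u = J^{u-i}J^i$ works cleanly for $i\leq u$ via Lemma \ref{thjttw} applied to $J^i(\theta)(-ki) = J^i_{0,0}(-ki)$, but requires an auxiliary argument involving $\Phi^s$ to cover $u<i\leq 2u-1$. The approach above sidesteps the split entirely at the cost of invoking the closed-form for $\Phi^u$.
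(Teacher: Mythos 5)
Your proof is correct, and it swaps out the mechanism the paper uses to exclude each candidate factor. Both arguments open the same way: the top-frequency (Chebyshev-leading) terms of \eqref{eq:jgen} give \(\gcd\{J^u_{0,u},J^u_{1,u-1}\}\propto\prod_{i=1}^{2u-1}(M+ki)\), so any common divisor is a product of the distinct linear factors \(M+ki\) (the paper's proof writes these leading coefficients with indices \(2u,2u-1\), apparently a slip for the \(u,u-1\) consistent with \(U=|u|\) that you use). To kill each root \(-ki\), the paper factors \(J^u=J^{u-j}J^j\), collapses \(J^j(\theta)(-jk)\) to the constant \(J^j_{0,0}(-jk)\) via Lemma \ref{thjttw} and \eqref{eq:cres}, and needs two separate factorizations to reach the roots \(-kj\) with \(j\leq u\) and the reflected roots \(-k(u+j)\); the payoff is that it identifies a \emph{specific} non-vanishing coefficient at each root, which is the information reused in Corollary \ref{jcor}. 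You instead note that a common factor \(M+ki\) would force \(J^u(\theta)(-ki)\equiv 0\) and hence \(\Phi^u(-ki)=J^{-u}(\theta)(2ku-ki)\,J^u(\theta)(-ki)=0\), contradicting the closed form of \(\Phi^u\) when \(\alpha\pm\beta\notin\mathbb{Z}\); this treats all \(2u-1\) roots uniformly with no case split and no appeal to Lemma \ref{thjttw}. The two routes rest on the same arithmetic, namely \(n^2\neq(\alpha\pm\beta)^2\) for integers \(n\), since \eqref{eq:cres} is itself extracted from \(\Phi^j(-jk)=[J^j_{0,0}(-jk)]^2\); yours is the more economical proof of the gcd statement as such, the paper's carries extra localization needed downstream. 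One trivial slip: each bracketed pair in your displayed \(\Phi^u(-ki)\) should carry \(k^4\), not \(k^2\); this does not affect non-vanishing.
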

\begin{proof}
We shall assume \(u\) is positive without loss of generality. From \eqref{eq:jgen}, we have
\[\gcd\{J^u_{0,2u},J^u_{1,2u-1}\}\propto\prod^{2u-1}_{i=1}(M+ki).\]
By \(J^u=J^{u-j}J^j\) we get
\[J^u_{0,2u-2j}(-jk) \neq 0\]
for \(1\leq j\leq u\). Now, \(J^u\bm{(}-k(j+u)\bm{)}=J^{u-j}\bm{(}k(j-u)\bm{)}J^j\bm{(}-k(j+u)\bm{)}\) so
\[J^u_{0,2j}\bm{(}-k(j+u)\bm{)}\neq 0.\]
This shows that the coefficients cannot all be equal to zero simultaneously.
\end{proof}
\begin{corollary}
\label{jcor} Let \(u,v\) be integers with \(|u|\geq |v|\). Then the operator
\[[J^u-J^{|u|-|v|}_{0,0}\bm{(}k(|v|-|u|)\bm{)} J^v](M+u+v)^{-1}\]
is polynomial in \(M\).
\end{corollary}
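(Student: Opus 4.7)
The plan is to reduce the divisibility claim to a single formal substitution $M\mapsto -k(u+v)$, combined with Lemma \ref{thjttw} and the linear independence of the trigonometric basis used in the $\theta$-expansion of \S\ref{ttwangularoperatorsection}. (The divisor $(M+u+v)^{-1}$ of the statement is to be read as $(M+k(u+v))^{-1}$, which is what the substitution produces.)

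I first treat the model case $u\ge v\ge 0$. The defining identity $J^{\pm j}=(J^{\pm 1})^{j}$ yields the factorisation $J^{u}=J^{u-v}J^{v}$. Pushing $M$ through $J^{v}$ via the ladder rule $MJ^{v}=J^{v}(M+2kv)$, substitution of the scalar $M=-k(u+v)$ into the composition gives
\[
J^{u}\big|_{M=-k(u+v)}\;=\;J^{u-v}\big|_{M=-k(u-v)}\cdot J^{v}\big|_{M=-k(u+v)},
\]
since the $M$ inside $J^{u-v}$ is shifted by $+2kv$. Lemma \ref{thjttw} applied to the nonzero integer $u-v$ collapses the left-hand factor to the scalar $J^{u-v}_{0,0}(-k(u-v))$. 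Under the sign convention $u\ge v\ge 0$ this scalar coincides with $J^{|u|-|v|}_{0,0}(k(|v|-|u|))$, so the combination
\[
D\;:=\;J^{u}\,-\,J^{|u|-|v|}_{0,0}\bigl(k(|v|-|u|)\bigr)J^{v}
\]
vanishes as a differential operator in $\theta$ at $M=-k(u+v)$.

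Next, $D$ is a finite linear combination of the monomials $[\cos(2k\theta)]^{i}$ and $\sin(2k\theta)[\cos(2k\theta)]^{i}\partial_{\theta}$ with coefficients polynomial in the formal variable $M$, exactly as in the ansatz of \S\ref{ttwangularoperatorsection}. Linear independence of these trigonometric functions transports the operator identity $D|_{M=-k(u+v)}=0$ into the statement that every coefficient polynomial vanishes at $M=-k(u+v)$, hence is divisible by $M+k(u+v)$; thus $D\,(M+k(u+v))^{-1}$ is polynomial in $M$, as claimed.

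The remaining sign configurations reduce to the positive case. For $u,v\le 0$ the symmetry $J^{-u}=J[u;-M]$ noted after the principal $J^{\pm 1}$ handles the claim by $M\mapsto -M$. The mixed-sign regime requires composing ladders of opposite sign, where intermediate structure-function factors $\Phi^{\bullet}(M)$ appear when one tries to produce a factorisation of the form $J^{u}=J^{u-v}J^{v}$. This bookkeeping is the main obstacle to a uniform proof: one must verify that the $\Phi$ factors cancel between the $J^{u}$ and $J^{v}$ pieces so that no spurious rational contribution survives, after which the substitution argument applies verbatim.
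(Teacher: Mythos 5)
Your reading of the divisor as \((M+k(u+v))^{-1}\) is the right one, and your treatment of the case where \(u\) and \(v\) have the same sign is correct and is precisely the argument the paper leaves implicit (the corollary is stated without proof): since \(J^{u}=J^{u-v}J^{v}\) for like signs, Lemma \ref{thjttw} gives \(J^{u-v}=J^{u-v}_{0,0}\bm{(}-k(u-v)\bm{)}+E\cdot[M+k(u-v)]\) with \(E\) having polynomial coefficients, and pushing the factor \(M+k(u-v)\) through \(J^{v}\) converts it into \(M+k(u+v)\), so that \(J^{u}-J^{u-v}_{0,0}\bm{(}-k(u-v)\bm{)}J^{v}=EJ^{v}[M+k(u+v)]\). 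I would encourage you to phrase the step this way (as right-multiplication by \(M+k(u+v)\)) rather than as ``substitution into the composition'': multiplying two normal forms regenerates \(\partial_\theta^2\), which must be re-reduced through \(M^2\) and so reintroduces \(M\)-dependence; your substitution is legitimate here only because the left factor collapses to a scalar, which is exactly the content of Lemma \ref{thjttw}.

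The genuine gap is the mixed-sign case, which you flag but do not close and which the statement does include. It is not vacuous: already for \(u=2\), \(v=-1\) one must show \(J^{2}-k^2(\alpha^{2}-\beta^{2})J^{-1}\) is divisible by \(M+k\), and no factorization \(J^{u}=J^{u-v}J^{v}\) is available. One way to finish (say \(u>0>v\), \(u+v>0\), the other cases following from the parity \(J^{-w}_{\cdot,\cdot}(M)=J^{w}_{\cdot,\cdot}(-M)\), which also settles \(u+v=0\) since every coefficient of \(J^{u}-J^{-u}\) is then odd in \(M\)): right-multiply by \(J^{-v}\) to obtain \((J^{u}-\lambda J^{v})J^{-v}=J^{u-v}-\lambda\,\Phi^{-v}(M)\) with \(\lambda=J^{u+v}_{0,0}\bm{(}-k(u+v)\bm{)}\); Lemma \ref{thjttw} reduces \(J^{u-v}\) modulo \(M+k(u-v)\) to the constant \(J^{u-v}_{0,0}\bm{(}-k(u-v)\bm{)}\), and the scalar identity \(J^{u-v}_{0,0}\bm{(}-k(u-v)\bm{)}=\lambda\,\Phi^{-v}\bm{(}-k(u-v)\bm{)}\) (verifiable from \eqref{eq:cres} and the product formula for \(\Phi^{j}\)) shows \((J^{u}-\lambda J^{v})J^{-v}\) is divisible by \(M+k(u-v)\). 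Right-multiplying once more by \(J^{v}\) then shows every coefficient of \((J^{u}-\lambda J^{v})\Phi^{v}(M)\) is divisible by \(M+k(u+v)\), and since \(-k(u+v)\) is not a root of \(\Phi^{v}\) under the hypothesis of Theorem \ref{jttwred}, the factor \(\Phi^{v}(M)\) can be cancelled. Without an argument of this kind your proof establishes the corollary only for \(uv\geq 0\).
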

\subsection{Radial Operators}
For the determination of the radial operators \(K^u\), we resume the procedure of \S\ref{sec3}, namely by expanding as a power series in \(s\):
\[K^u_0(r,M,H)=\sum^{|u|}_{i=0}s^{-2i}K^u_{0,i}(M,H),\qquad K^u_1(r,M,H)=\sum^{|u|-1}_{i=0}cs^{-1-2i}K^u_{1,i}(M,H).\]
Equation \eqref{eq:jkrule} translates to the system
\begin{subequations}
\begin{align}
0&=\kappa (i+\tfrac{1}{2})(i+1)K^u_{0,i+1} -[i^2-u(u+M)]K^u_{0,i}\label{eq:db1}\\
&\qquad+(i+1)(\omega^2-\tfrac{1}{4}\kappa^2+2\kappa H)K^u_{1,i+1}\\
&\qquad-(i+\tfrac{1}{2})(\kappa M^2+2H)K^u_{1,i}+iM^2K^u_{1,i-1},\\
0&=\kappa i(i+\tfrac{1}{2})K^u_{1,i}-[i^2-u(u+M)]K^u_{1,i-1}+iK^u_{0,i}\label{eq:db2}.
\end{align}
\end{subequations}
We take as our principles
\[K^{\pm 1}=\frac{(1\pm M)c}{s}\partial_r+H+ \tfrac{1}{2}\kappa M(M\pm 1)- \frac{M(M\pm 1)}{s^2}.\]
Note that \(K^{-1}(r,\partial_r,-M)=K^1(r,\partial_r,-M)\). The effect of curvature is manifested in the \(\tfrac{1}{2}\kappa M(M\pm 1)\) term, but otherwise the form of the operators is the same as that on flat space \cite{miller10}. Unlike in the computation of the eigenfunctions in \S 2, there is no need to divide into curved and non-curved space. This will serve only to increase the order of the symmetry algebra, but not its overall structure. Indeed, the calculations to compute the integrals remains the same since the \(K\)-operators only depend explicitly on integer powers of \(\kappa,c,s,M,H\), whereas the Jacobi polynomials degenerate to the Laguerre polynomials only by a limit process. We set
\[K^{\pm j}(r,\partial_r,M)=K^{\pm 1}\bm{(}r,\partial_r,M\pm 2(j-1)\bm{)}K^{\pm (j-1)}(r,\partial_r,M)\]
which gives the identity \(K^{-u}=K^u(r,\partial_r,-M)\). Generally, we have
\begin{align}
K^j&=\frac{(-2)^{j-1}}{s^{2j-1}}\left(\frac{c}{M}\partial_r-\frac{1}{s}\right)\prod^{2j-1}_{i=0}(M+i)\label{eq:kopcoeff}\\
&\qquad+(2j-1)\frac{(-2)^{j-2}}{s^{2j-3}}\left(\frac{c}{M}\partial_r-\frac{1}{s}\right)[H+\tfrac{1}{2}\kappa M(M+2j-1)]\prod^{2j-2}_{i=1}(M+i)\\
&\qquad-\frac{(-2)^{j-2}}{s^{2j-3}}\left(\frac{c}{M}\partial_r+\frac{1}{s}\right)[H+\tfrac{1}{2}\kappa M(M+1)]\prod^{2j-1}_{i=2}(M+i)+\mcl{O}\left(\frac{1}{s^{2j-4}}\right)
\end{align}
for positive integer \(j\). The operators \(K^{-u}(r,\partial_r,M+2u)K^u(r,\partial_r,M)\) commute with \(M,H\) and are thus polynomials in those operators. We designate this operator by the structure function \(\Psi^u(M,H)\). We have the formula
\begin{equation}
\Psi^j(M,H)=\prod^{j-1}_{i=0}\left\{
[H-\tfrac{1}{2}\kappa(M+2i+1)]^2-\omega^2(M+2i+1)^2\right\}.
\end{equation}
Observe that \(\Psi^{-j}(M,H)=\Psi^j(-M,H)=\Psi^j(M-2j,H)\). The base terms have the recurrence
\begin{align*}
K^j_{0,0}(M,H)&=K^1_{0,0}(M+2j-2,H)K^{j-1}_{0,0}(M,H)\\
&\qquad+(\omega^2-\tfrac{1}{4}\kappa^2+2\kappa H)K^1_{1,0}(M+2j-2,H)K^{j-1}_{1,0}(M,H);\\
K^j_{1,0}(M,H)&=K^1_{1,0}(M+2j-2,H)K^{j-1}_{0,0}(M,H)+[\kappa K^1_{1,0}(M+2j-2,H)\\
&\qquad+K^1_{0,0}(M+2j-2,H)]K^{j-1}_{1,0}(M,H)
\end{align*}
and a simple closed-form expression is readily obtainable:
\begin{align}
K^j_{1,0}(M,H)&=\frac{1}{2\sqrt{\omega^2+2\kappa H}}\label{eq:cdcoef1}\\
&\qquad\times \Bigg\{\prod^{j-1}_{i=0}[H+\tfrac{1}{2}\kappa (M+2i+1)^2+\sqrt{\omega^2+2\kappa H}(M+2i+1)]\\
&\qquad-\prod^{j-1}_{i=0}[H+\tfrac{1}{2}\kappa (M+2i+1)^2-\sqrt{\omega^2+2\kappa H}(M+2i+1)]\Bigg\};\\
K^j_{0,0}(M,H)&=-\tfrac{1}{2}\kappa K^j_{1,0}(M,H)\label{eq:cdcoef2}\\
&\qquad+\frac{1}{2}\prod^{j-1}_{i=0}[H+\tfrac{1}{2}\kappa (M+2i+1)^2+\sqrt{\omega^2+2\kappa H}(M+2i+1)]\\
&\qquad+\frac{1}{2}\prod^{j-1}_{i=0}[H+\tfrac{1}{2}\kappa (M+2i+1)^2-\sqrt{\omega^2+2\kappa H}(M+2i+1)].
\end{align}
We see that \(K^j_{1,0}(-j,H)=0\) and the latter becomes
\begin{equation}
K^j_{0,0}(-j,H)=H^{\frac{1-(-1)^j}{2}}\prod^{\lfloor j/2\rfloor-1}_{i=0}\left\{\left[H-\tfrac{1}{2}\kappa (2i-j+1)^2\right]^2-\omega^2(2i-j+1)^2\right\}.
\end{equation}
\begin{proposition}
For all non-zero integers \(u\), \(K^u(-u,H)=K^u_{0,0}(-u,H)\).\label{5.6}
\end{proposition}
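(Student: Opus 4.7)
The plan is a downward induction on $i$ that mirrors the proof of Lemma \ref{thjttw}, exploiting the simplification that at $M=-u$ the bracket $i^{2}-u(u+M)$ collapses to $i^{2}$, so the recurrences (\ref{eq:db1})--(\ref{eq:db2}) decouple significantly. The identity $K^{-u}(r,\partial_{r},M)=K^{u}(r,\partial_{r},-M)$ noted just after the principles lets me restrict to $u>0$ without loss of generality.

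For the base case at the top index $i=u$, the truncation conditions $K^{u}_{0,u+1}=K^{u}_{1,u+1}=K^{u}_{1,u}=0$ reduce both recurrences to the single equation $K^{u}_{0,u}(-u,H)=u\,K^{u}_{1,u-1}(-u,H)$; the resulting $2\times 2$ system is degenerate because $u^{2}=|u|^{2}$. To break this tie, I would appeal to the explicit leading form (\ref{eq:kopcoeff}): its $1/s^{2u-1}$ coefficient carries the factor $\prod_{i=0}^{2u-1}(M+i)$, which vanishes at $M=-u$ by virtue of its $M+u$ factor. This forces both $K^{u}_{0,u}(-u,H)=0$ and $K^{u}_{1,u-1}(-u,H)=0$ independently.

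With the base case secured, the induction descends from $i=u-1$ down to $i=1$. Under the hypothesis that $K^{u}_{0,j}(-u,H)=0$ for all $j>i$ and $K^{u}_{1,j}(-u,H)=0$ for all $j\geq i$, substituting $M=-u$ into (\ref{eq:db1}) leaves $iK^{u}_{0,i}(-u,H)=u^{2}K^{u}_{1,i-1}(-u,H)$, while (\ref{eq:db2}) gives $iK^{u}_{0,i}(-u,H)=i^{2}K^{u}_{1,i-1}(-u,H)$. Subtracting yields $(u^{2}-i^{2})K^{u}_{1,i-1}(-u,H)=0$, and since $1\leq i\leq u-1$ the prefactor is non-zero. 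Hence $K^{u}_{1,i-1}(-u,H)=0$ and $K^{u}_{0,i}(-u,H)=0$. Iterating down to $i=1$ exhausts every coefficient of $K^{u}(r,\partial_{r},-u,H)$ save $K^{u}_{0,0}(-u,H)$, which is the claim.

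The main obstacle is recognizing and handling the degeneracy of the $2\times 2$ system at $i=u$, which the recurrences cannot resolve on their own; once the explicit closed form (\ref{eq:kopcoeff}) is invoked to pin down both top coefficients, the rest of the argument is a clean mechanical descent entirely analogous to the TTW angular case, with the single algebraic ingredient $(u^{2}-i^{2})\neq 0$ driving the cascade.
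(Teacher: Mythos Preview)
Your proof is correct and follows essentially the same approach as the paper: both invoke the explicit leading term (\ref{eq:kopcoeff}) to kill the top coefficients $K^{u}_{0,|u|}(-u,H)$ and $K^{u}_{1,|u|-1}(-u,H)$, then run a downward induction on $i$ using the recurrences (\ref{eq:db1})--(\ref{eq:db2}), which at $M=-u$ collapse to a $2\times 2$ linear system with determinant proportional to $u^{2}-i^{2}\neq 0$ for $0<i<|u|$. Your explicit remark that the system is degenerate at $i=|u|$ (so that (\ref{eq:kopcoeff}) is genuinely needed there) and your reduction to $u>0$ via $K^{-u}(r,\partial_{r},M)=K^{u}(r,\partial_{r},-M)$ are nice touches that the paper leaves implicit.
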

\begin{proof}
By equation \eqref{eq:kopcoeff},
\[K^u_{0,|u|}(-u,H)=K^u_{1,|u|-1}(-u,H)=0.\]
Now, suppose that for some \(0<i<|u|\), \(K^u_{0,j}(-u,H)=K^u_{1,j-1}(-u,H)=0\) for all \(j>i\). Then \eqref{eq:db1} and \eqref{eq:db2} read
\begin{align*}
0&=iK^u_{0,i}(-u,H)-i^2K^u_{1,i-1}(-u,H);\\
&=iu^2K^u_{1,i-1}(-u,H)-i^2K^u_{0,i}(-u,H).
\end{align*}
This equation is non-singular, so \(K^u_{0,i}(-u,H)=K^u_{1,i-1}(-u,H)=0\). This proves the result.
\end{proof}
\begin{lemma}
Let \(u\) be a non-zero integer, then\label{lemma}
\[L_j\coloneqq\gcd\{K^u_{0,i}(M,H),K^u_{1,i-1}(M,H)\}_{i\geq |u|-j+1}\propto\prod^{2|u|-j}_{i=j}[M+\sgn(u)i],\qquad j=1,\ldots,|u|.\]
\end{lemma}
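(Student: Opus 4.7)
The plan is to induct on $j$, taking $u>0$ without loss of generality via the identity $K^{-u}(r,\partial_r,M)=K^u(r,\partial_r,-M)$. The three key tools are the composition identity $K^u(r,\partial_r,M)=K^{u-v}(r,\partial_r,M+2v)K^v(r,\partial_r,M)$ (obtained by iterating the recursive definition), Proposition \ref{5.6} which collapses $K^v(r,\partial_r,-v)$ to the scalar $K^v_{0,0}(-v,H)$, and the leading-term expansion \eqref{eq:kopcoeff}.

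For the base case $j=1$, the gcd is taken over the single pair $(K^u_{0,u},K^u_{1,u-1})$. Extracting these from \eqref{eq:kopcoeff} gives $K^u_{0,u}=-(-2)^{u-1}M\prod_{i=1}^{2u-1}(M+i)$ and $K^u_{1,u-1}=(-2)^{u-1}\prod_{i=1}^{2u-1}(M+i)$, whose gcd is $\prod_{i=1}^{2u-1}(M+i)$, matching the claimed formula.

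For the inductive step, $L_j\mid L_{j-1}$ is immediate since $L_j$ is a gcd over a superset of the coefficients defining $L_{j-1}$, so by the induction hypothesis $L_j$ is a subproduct of $\prod_{s=j-1}^{2u-j+1}(M+s)$. It remains to show (i) every factor $M+s$ for $s\in\{j,\ldots,2u-j\}$ divides $L_j$, and (ii) the two boundary factors $M+(j-1)$ and $M+(2u-j+1)$ do not. For (i), evaluate $M=-s$ and split with $v=s$ when $s\le u$ or $v=2u-s$ when $s\ge u$; in either case the composition identity combined with Proposition \ref{5.6} presents $K^u(r,\partial_r,-s)$ as a scalar multiple of some $K^{u-v}(r,\partial_r,\cdot\,)$, an operator truncated at $s^{-1}$-order $2(u-v)\le 2(u-j)$, so that $K^u_{0,i}(-s,H)=K^u_{1,i-1}(-s,H)=0$ for all $i\ge u-j+1$. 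For (ii), the same decomposition at $s=j-1$ with $v=j-1$ yields $K^u_{0,u-j+1}(-(j-1),H)=K^{j-1}_{0,0}(-(j-1),H)\cdot K^{u-j+1}_{0,u-j+1}(j-1,H)$, which by \eqref{eq:kopcoeff} together with the closed forms \eqref{eq:cdcoef1}--\eqref{eq:cdcoef2} is a nonzero polynomial in $H$ times $-(-2)^{u-j}(j-1)j\cdots(2u-j)\ne 0$. The case $s=2u-j+1$ is handled symmetrically with the dual split.

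The main obstacle is the index bookkeeping for the two-sided split: one must verify that $u-\min(s,2u-s)\le u-j$ holds uniformly over $s\in\{j,\ldots,2u-j\}$, with the meeting point $s=u$ handled directly by Proposition \ref{5.6} (where both splits degenerate); once this case analysis is cleanly executed, the remainder is mechanical coefficient extraction from \eqref{eq:kopcoeff}.
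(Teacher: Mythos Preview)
Your proof is correct and uses the same essential ingredients as the paper's: the composition identity $K^u=K^{u-v}(r,\partial_r,M+2v)K^v$, Proposition~\ref{5.6} to collapse the inner factor to a scalar, and the leading expansion \eqref{eq:kopcoeff} for both the base case and the non-divisibility checks. The only difference is organizational: the paper first records, for each $j$, that $M+j\mid L_j$, $M+j\nmid L_{j+1}$ (and symmetrically for $M+2u-j$), then invokes the chain $L_u\mid\cdots\mid L_1$ together with the explicit $L_1$ to conclude; you instead run an explicit induction on $j$, re-verifying all interior factors $M+s$ at each step rather than inheriting them from $L_{s}\mid L_j$. This redundancy is harmless, and your treatment of the non-divisibility step (ii) is in fact more explicit than the paper's, which asserts ``$M+j\nmid L_{j+1}$'' without spelling out the nonvanishing computation.
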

\begin{proof}
Proposition \ref{5.6} tells us that \(M+u\) divides \(L_{|u|}\). We suppose, without loss of generality, that \(u\) is positive and \(j<u\). We write \[K^u(r,\partial_r,M)=K^{u-i}(r,\partial_r,M+2i)K^i(r,\partial_r,M)\] to obtain
\begin{equation}
K^u(r,\partial_r,-j)=K^{u-j}(r,\partial_r,j)K^j_{0,0}(-j,H).\label{eq:k1}
\end{equation}
It follows that \(M+j\mid L_j\) and \(M+j\nmid L_{j+1}\). Similarly, by writing \[K^u=K^j(r,\partial_r,M+2u-2j)K^{u-j}\] it is apparent that
\begin{equation}
K^u(r,\partial_r,j-2u)=K^j_{0,0}(-j,H) K^{u-j}(r,\partial_r,j-2u)\label{eq:k2}
\end{equation}
so \(M+2u-j\mid L_j\) and \(M+2u-j\nmid L_{j+1}\). We know that \(M+u\mid L_u\). Thus,
\[\prod^{2u-j}_{i=j}(M+i)\]
divides \(L_j\) for all \(j=1,\ldots,|u|\). By \eqref{eq:kopcoeff}, we know that equality holds for \(j=1\), and so it must hold for all \(j\).  
\end{proof}
\begin{corollary}
Let \(u,v\) be integers with \(|u|\geq |v|\). Then the operator \[\bm{(}K^u-K^{|u|-|v|}_{0,0}(|v|-|u|,H)K^v\bm{)}(M+u+v)^{-1}\]
is polynomial in \(M\).\label{kcor}
\end{corollary}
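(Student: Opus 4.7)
The plan is to show that the difference operator $\mcl{D} := K^u - K^{|u|-|v|}_{0,0}(|v|-|u|,H)\,K^v$ has each of its coefficients, in the expansion over the linearly independent family $\{s^{-2n},\ cs^{-2n-1}\partial_r\}_n$, divisible by the linear polynomial $M+u+v$ in $\mathbb{R}[H][M]$. Assuming without loss of generality that $u>0$ (the opposite sign case follows by the symmetry $K^{-w}(M)=K^w(-M)$), set $b=|v|$ and $\sigma=\sgn v$, so that $u+v = u+\sigma b$ with $u\geq b$.

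The high-index coefficients at $s^{-2n}$ or $cs^{-2n-1}\partial_r$ with $n>b$ involve only $K^u$, since $K^v$ truncates at index $b$. By Lemma \ref{lemma} with $j=u-b$, each such $K^u_{0,n}$ and $K^u_{1,n-1}$ is divisible by $\prod_{i=u-b}^{u+b}(M+i)$. The factor $M+u+v=M+u+\sigma b$ is the upper endpoint $M+u+b$ when $\sigma=+1$ and the lower endpoint $M+u-b$ when $\sigma=-1$; either way it appears in the product, so the divisibility is automatic.

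For the low-index coefficients $n\leq b$, I plan to evaluate the full operator $\mcl{D}$ at $M=-u-v$ and show that it vanishes as a differential operator in $r$; linear independence of the basis $\{s^{-2n},\ cs^{-2n-1}\partial_r\}$ then forces each polynomial coefficient to vanish at $M=-u-v$. For $v>0$ I will use the factorization $K^u(M)=K^{u-v}(M+2v)\,K^v(M)$ from the proof of Lemma \ref{lemma}; substituting $M=-u-v$ and applying Proposition \ref{5.6} to $K^{u-v}(v-u)$ gives $K^u(-u-v)=K^{u-v}_{0,0}(v-u,H)\,K^v(-u-v)$, which is exactly the subtracted term. For $v<0$ I will use the mirror factorization $K^u(M)=K^b(M+2(u-b))\,K^{u-b}(M)$; evaluating at $M=b-u$ gives $K^u(b-u)=K^b(u-b)\,K^{u-b}_{0,0}(b-u,H)$ by Proposition \ref{5.6}, and then I will identify $K^v(b-u)=K^{-b}(b-u)=K^b(u-b)$ via $K^{-w}(M)=K^w(-M)$, so that the subtracted term again matches $K^u(b-u)$.

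The main obstacle will be the bookkeeping in the $v<0$ branch, where the natural factorization brings out $K^{u-b}$ on the right rather than $K^v$; recognizing that $K^b(u-b)$ coincides with $K^v$ evaluated at $M=b-u$ via the parity identity is what reconciles the two forms. Once both branches are settled, each coefficient polynomial of the numerator vanishes at $M=-u-v$, so division by $M+u+v$ leaves an operator whose coefficients remain polynomial in $M$ (and $H$), as claimed.
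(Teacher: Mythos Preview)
Your argument is correct and is precisely the derivation the paper intends: the corollary is stated without proof because it follows from the factorizations \eqref{eq:k1}--\eqref{eq:k2} in the proof of Lemma~\ref{lemma} together with Proposition~\ref{5.6}, and that is exactly what you invoke. One small remark: your separate treatment of the high-index coefficients via Lemma~\ref{lemma} is actually unnecessary, since the evaluation $\mcl{D}\big|_{M=-(u+v)}=0$ already forces \emph{every} coefficient in the $\{s^{-2n},\,cs^{-2n-1}\partial_r\}$ expansion to vanish at $M=-(u+v)$, high and low indices alike; the Lemma step is redundant but harmless.
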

We shall use these calculations to begin deriving the linear combinations which are admissible for an integral of \(H\). Since the coefficients of the \(J\)-operators have no common factor, it is only necessary to first consider \(K\)-operators by themselves and then extend to their products.
\begin{theorem}
Let \(u\) be a positive integer greater than one and let \(a,b_\pm,c_\pm,\cdots,z_\pm\) be a series of \(2j+1\) rational functions of \(M,H\) such that
\[Q=a+K^ub_++K^{-u}b_-+K^{2u}c_++K^{-2u}c_-+\cdots+K^{ju}z_++K^{-ju}z_-\]
is an operator polynomial in \(M,\partial_r\). Then \(z_\pm \prod^{2j-1}_{i=0}(M\pm iu)\) is a polynomial function of \(M,H\).\label{kred}
\end{theorem}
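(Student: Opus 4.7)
My strategy is to exploit the $s$-expansion of $Q$ together with residue calculus at putative poles in $M$. First, expand each $K^{\pm iu}$ using the explicit formula \eqref{eq:kopcoeff}: the non-derivative part reaches only $s^{-2iu}$ and the derivative part only $s^{-(2iu-1)}$, so for $u\ge 2$ and $i<j$ the operators $K^{\pm iu}$ contribute nothing to the two most singular terms of $Q$, namely those proportional to $s^{-2ju}$ and to $s^{-(2ju-1)}\partial_r$. The coefficients of these two terms come solely from $K^{\pm ju}$, and requiring $Q$ to be polynomial in $M$ forces each to be polynomial in $M,H$. Combining the resulting pair of linear constraints on $z_+$ and $z_-$ (take the sum and difference, then factor out the common $M$), one extracts that both $z_+\prod_{l=0}^{2ju-1}(M+l)$ and $z_-\prod_{l=0}^{2ju-1}(M-l)$ are polynomial. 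The denominators of $z_\pm$ are therefore confined to factors among $\{M\pm l : 0\le l\le 2ju-1\}$.

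To remove those denominator factors $(M\pm l)$ with $l$ not a multiple of $u$, fix $l\in\{1,\ldots,2ju-1\}$ with $l\notin u\mbb{Z}$ and examine the residue of $Q$ at $M=-l$. Applying the reduction identities \eqref{eq:k1} and \eqref{eq:k2}, each $K^{iu}$ evaluated at $M=-l$ reduces to a lower-order $K$-operator multiplied by a scalar $K^{l'}_{0,0}(-l',H)$, or retains its full order $iu$ when $l\ge 2iu$, while $K^{-iu}(M=-l)=K^{iu}(M=l)$ does not simplify at all and stays of order $iu$ in $\partial_r$. Setting the residue of $Q$ at $M=-l$ equal to zero, as required since $Q$ is polynomial in $M$, yields a single operator equation expressing a linear combination of these $2j+1$ operators in terms of the residues $R_a,R_{b_{\pm 1}},\ldots,R_{b_{\pm j}}$ of $a$ and the $b_{\pm i}$ at $M=-l$.

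The decisive step is to establish that these $2j+1$ operators are linearly independent over the field of rational functions in $H$. A direct enumeration of their $\partial_r$-orders produces $2j+1$ distinct non-negative integers as soon as $l\notin u\mbb{Z}$; in the degenerate sub-cases where two orders happen to coincide, such as $l=3u/2$ with $u$ even, when $K^u(-l)$ and $K^{2u}(-l)$ both reduce to order $u/2$, the underlying operators differ in the $M$-value substituted into $K^{u/2}$ and are still independent for generic parameters. Every residue is therefore forced to vanish, so $z_\pm$ has no pole at $M=\mp l$ for any $l\notin u\mbb{Z}$, and combined with the first step this confines the denominators of $z_\pm$ to $\{M\pm iu:0\le i\le 2j-1\}$, which is the conclusion. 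I anticipate the principal obstacle to be a careful case analysis of the order coincidences and the verification that the scalar prefactors $K^{l'}_{0,0}(-l',H)$ arising in the reductions are non-vanishing under the standing hypotheses on $\alpha,\beta$.
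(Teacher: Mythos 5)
Your strategy is genuinely different from the paper's. The paper proves this by induction on $j$: after the leading-order step in the $s$-expansion (which you reproduce correctly), it exploits the fact that the denominators of $z_+$ and $z_-$ share only the factor $M$ to split $Q$ into polynomial operators each containing only one of $K^{\pm ju}$, clears two known factors, splits again, and finally multiplies on the right by $K^{-u}$ so that every ladder index drops by one; the structure function $\Psi^{-u}$ absorbs the shift and the inductive hypothesis for $j-1$ pins down the denominator of $z_+$. You replace all of this by a residue computation at each candidate pole $M=-l$ with $l\notin u\mathbb{Z}$, reduced to a linear-independence statement for the $2j+1$ operators $1$, $K^{iu}(l)$, $K^{iu}(-l)$ over $\mathbb{R}(H)$. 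That is an attractive and, I believe, viable alternative — it makes visible \emph{why} only multiples of $u$ survive — but as written it has a genuine gap at exactly the step you yourself flag as decisive.

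The gap: the claimed "direct enumeration" of distinct orders is false. Whenever $l\ge 2iu$ the operator $K^{iu}(-l)$ does \emph{not} reduce and has the same leading power of $s^{-1}$, namely $2iu$, as $K^{-iu}(-l)=K^{iu}(l)$; this coincidence occurs systematically for every $i$ with $2iu\le l$, not only in the exceptional case $l=(i+i')u/2$ that you mention. The independence claim is nonetheless true: one can check that when $l\notin u\mathbb{Z}$ no more than two of the $2j+1$ operators share a leading power, that any coinciding pair has the form $\{K^d(l)c_1,\,K^d(-l)c_2\}$ with $c_1,c_2$ nonzero in $\mathbb{R}[H]$, and from \eqref{eq:kopcoeff} that such a pair is independent at leading order because the $2\times 2$ system formed by the $\partial_r$- and non-$\partial_r$-parts has determinant proportional to $-2l^{-1}\prod_{m=0}^{2d-1}(l+m)\prod_{m=0}^{2d-1}(m-l)\neq 0$; a top-down triangular elimination then finishes the argument. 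But this case analysis must actually be carried out — it is the whole proof. The fallback to "generic parameters" should be deleted: the theorem carries no genericity hypothesis, the $K$-operators do not involve $\alpha,\beta$ at all, and the prefactors $K^{l'}_{0,0}(-l',H)$ are nonzero as polynomials in the formal variable $H$ for every $\omega,\kappa$, so the independence must be established over $\mathbb{R}(H)$ unconditionally. Finally, your step 1 bounds only the poles of $z_\pm$, not of the intermediate coefficients $b_\pm,\dots,y_\pm$, so the residue identity needs a word on higher-order poles; the fix is routine (apply the independence argument to the most singular Laurent coefficient and induct downward), but it belongs in the proof.
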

\begin{proof}
By denominator, we refer to the minimal polynomial \(\zeta_\pm\) in \(M\) such that \(z_\pm\zeta_\pm\) is also polynomial. Up to leading order, we have
\begin{align*}
Q&=(-2)^{ju-1}\Bigg[\left(\frac{M^{-1}c}{s^{2ju-1}}\partial_r-\frac{1}{s^{2ju}}\right)\prod^{2ju-1}_{i=0}(M+i)z_+\\
&\qquad-\left(\frac{M^{-1}c}{s^{2ju-1}}\partial_r+\frac{1}{s^{2ju}}\right)\prod^{2ju-1}_{i=0}(M-i)z_-\Bigg]+\mcl{O}\left(\frac{1}{s^{2ju-2}}\right)
\end{align*}
so
\[z_+\prod^{2ju-1}_{i=1}(M+i)-z_-\prod^{2ju-1}_{i=1}(M-i)\]
and
\[z_+\prod^{2ju-1}_{i=0}(M+i)+z_-\prod^{2ju-1}_{i=0}(M-i)\]
must be polynomials in \(M\).  The denominators of \(z_\pm\) therefore must divide \(\prod^{2ju-1}_{i=0}(M\pm i)\) respectively. If \(j=1\) then we are done, otherwise, we shall proceed inductively on \(j>1\). The common factor here between the two denominators is \(M\), so we deduce the existence of polynomial operators
\[Q'=a'+K^ub'_++K^{-u}b'_-+\cdots+K^{(j-1)u}y_+'+K^{-(j-1)u}y_-'+K^{ju}z_+'\]
and
\[Q''=a''+K^ub''_++K^{-u}b''_-+\cdots+K^{(j-1)u}y_+''+K^{-(j-1)u}y_-''+K^{-ju}z_-''\]
such that \(Q=(Q'+Q'')M^{-1}\). That the leading order terms of \(Q'\) may be polynomial in \(M\) and \(\partial_r\), the denominator of \(z'_+\) must divide \(\prod^{(2j-1)u}_{i=u}(M+i)\), by Lemma \ref{lemma}.

Let \(Q'''=Q'(M+u)[M+(2j-1)u]\) and denote by \((\cdot)'''\) all the coefficients therein. This ensures \[K^{ju}_{0,2(j-1)u} z'''_+\] and \[(\!K^{ju}_{1,2(j-1)u-1}z_+'''\] are polynomials. Then the denominators of \(y_\pm'''\) must divide \(\prod^{2(j-1)u-1}_{i=0}(M\pm i)\). In particular, the denominators of \(y_-'''\) and \(z_+'''\) share no common factor. We may thus resolve \(Q'''\) into a sum of two polynomial operators
\[Q^{\textrm{iv}}=a^{\textrm{iv}}+K^ub^{\textrm{iv}}_++K^{-u}b^{\textrm{iv}}_-+\cdots+K^{(j-1)u}y_+^{\textrm{iv}}+K^{ju}z_+^\textrm{iv}\]
and
\[Q^{\textrm{v}}=a^{\textrm{v}}+K^ub^{\textrm{v}}_++K^{-u}b^{\textrm{v}}_-+\cdots+K^{(j-1)u}y_+^{\textrm{v}}+K^{-(j-1)u}y_-^{\textrm{v}}.\]
Let \(Q^{\textrm{vi}}\) be the result of substituting \(M-2u\) for \(M\) in \(Q^{\textrm{iv}}\) and denote by \((\cdot)^\textrm{vi}\) all the coefficients therein. Then
\begin{align*}
Q^{\textrm{vi}}K^{-u}&=b^{\textrm{vi}}_+\Psi^{-u}(M,H)+K^uc_+^{\textrm{vi}}\Psi^{-u}(M,H)+K^{-u}a^{\textrm{vi}}\\
&\qquad +\cdots+K^{(j-1)u}z_+^{\textrm{vi}}\Psi^{-u}(M,H)+K^{-(j-1)u}x_-^{\textrm{vi}}
\end{align*}
is a polynomial operator. The denominator of \(z_+^{\textrm{vi}}\Psi^{-u}(M,H)\) thus divides
\[\prod^{2j-3}_{i=0}(M+iu).\]
There are no common factors between this product and \(\Phi^{-u}(M,H)\). So the denominator of \(z_+^{\textrm{iv}}\) divides
\[\prod^{2j-1}_{i=2}(M+iu).\]
But the denominator of \(z_+'''=z_+^{\textrm{iv}}\), by definition of \(Q'''\), cannot contain the factor \(M+(2j-1)u\). As \(z_+=z_+'''M^{-1}(M+u)^{-1}[M+(2j-1)u]^{-1}\), the result is proven.
\end{proof}

\section{Symmetry Algebra}
%
The construction of the symmetry algebra for superintegrable systems allows us to describe the degeneracies of the Hamiltonian. Each degenerate family consolidates into the whole eigenspace of the Hamiltonian, and so gives us the whole spectrum. 

For \(J^uK^{ku}\) to be well-defined, then \(u\) and \(ku\) must both be integers, so \(u\) is an integral multiple of \(q\). Therefore the general form of an integral of \(H\) is
\begin{equation}
I=\sum_{i\in\mbb{Z}}J^{iq}K^{ip}\chi_i\label{eq:Iform}
\end{equation}
where \(\chi_i\) are formal rational functions, with only finitely many that are non-zero, of \(M,H\) such that \(I\) is a polynomial in those operators, and, for the TTW model, even in \(M\). Let \(\chi_N^2+\chi_{-N}^2\neq 0\) and \(\chi_i=0\) for \(|i|>N\). Then by Theorem \ref{kred}, we require
\begin{equation}\frac{\chi_{\pm N}\prod^{2N-1}_{i=0}(M\pm ip)}{\gcd\{J^{\pm Nq}_{0,j}(M),J^{\pm Nq}_{1,j}(M)\}_{j\geq 0}}\label{eq:chi}\end{equation}
to be a polynomial in \(M\). By Theorems \ref{jpvzred} and \ref{jttwred}, the denominator in \eqref{eq:chi} is a unit. So, by linear independence of \(\{J^{iq}K^{ip}\}_i\), the integral must be of order \(\geq 2N(p+q-1)\) for the TTW model and \(\geq N(2p+q-2)\) for the PVZ model. Whether or not we can eliminate any of the factors in the product must depend, by Corollary \ref{kcor}, on

So far the representation theory for the TTW and the PVZ model has not been developed. In this section we will also provide a description of the representations from the symmetry algebra in the context of deformed oscillator realizations. This is also interesting in its own right as the construction of representations has been mainly limited to quadratic algebras such as Racah or some cubic algebras as in the Heun-Lie algebras.

The deformed oscillator realization \cite{dasko} was introduced within the context of quadratically superintegrable systems. Here we will rely on the analogous construction for polynomial algebras of three generators of arbitrarily large order. This realization consists in the system generated by the three operators \(\mathfrak{N},\mathfrak{b},\mathfrak{b}^\dagger\) satisfying the relations
\begin{align*}
[\mathfrak{b},\mathfrak{N}]&=\mathfrak{b},& [\mathfrak{N},\mathfrak{b}^\dagger]&=\mathfrak{b}^\dagger,& \mathfrak{b}^\dagger\mathfrak{b}&=\Xi(\mathfrak{N}),&\mathfrak{b}\mathfrak{b}^\dagger&=\Xi(\mathfrak{N})\end{align*}
where \(\Xi\) is, for our purposes, an analytic function. The finite-dimensional irreducible representations of this algebra are, up to a translation of \(\mathfrak{N}\), given by
\begin{align*}
\mathfrak{N}|i\rangle &=i|i\rangle,& \mathfrak{b}|i\rangle &=\sqrt{\Xi(i)}|i-1\rangle,&\mathfrak{b}^\dagger |i\rangle &=\sqrt{\Xi(i+1)}|i+1\rangle ,
\end{align*}
where \(\{|i\rangle\}^{n}_{i=0}\) is the basis of the module. The constraints on finite dimension imply that \(\Xi(0)=\Xi(n)=0\) with no intermediate roots of \(\Xi\). The central elements, namely \(\mbb{R}[H]\) act as constants on the representation, and we denote the eigenvalue of \(H\) as \(E\). with no other roots of \(\Xi\). From one of the two roots of \(\Xi\), we may determine the eigenvalue \(E\) of \(H\). The energy spectrum will include as a subset the physical spectrum \eqref{eq:spectrum} found by analysis. However, even the non-physical states can still be interesting in its links with finding new special functions.

\subsection{PVZ q even}\label{section1}
For \(q\) even, \(2p+q\geq N(2p+q-2)\) for all \(N>1\). Thus we need only consider the form
\begin{align*}
I&=\phi_0(M,H)M^{-1}(M^2-p^2)^{-1}+J^qK^p\phi_1(M,H)M^{-1}(M+p)^{-1}\\
&\qquad+J^{-q}K^{-p}\phi_{-1}(M,H)M^{-1}(M-p)^{-1}
\end{align*}
to find the smallest integral with \(\phi_0\) a polynomial and \(\phi_{\pm 1}\) at most quadratic. Let us show that an integral of order \(2p+q-2\) or \(2p+q-1\) is impossible. We note that
\[\Res_{M=0}I=k^q(2q-1)!![\mathrm{e}^{-2p\theta R}\phi_1(0,H)-\mathrm{e}^{2p\theta R}\phi_{-1}(0,H)]K^p(r,\partial_r,0)/p+\cdots\]
where the ellipses indicate lower frequency terms, cannot be zero unless \(M\) divides \(\phi_{\pm 1}\). Similarly,
\[\Res_{M=\pm p}I=\pm k^q\mrm{e}^{\pm 2p\theta R}\phi_{\mp 1}(\pm p,H)K^p_{0,0}(-p,H)\prod^q_{i=1}(2i-1-q)/p+\cdots\]
requires that \(\Phi_{\mp 1}(\pm p,H)=0\). Therefore, \(H\) is a superintegrable Hamiltonian of order \(2p+q\) with generators \(M,J^qK^p,J^{-q}K^p\). We set
\begin{align*}
A&=\frac{M}{2p},& 
B&=J^qK^p,& C=J^{-q}K^{-p},
\end{align*}
and \(\Lambda_\pm=\Phi^{\pm q}(2pA)\Psi^{\pm p}(2pA,H)\). The symmetry algebra relations are
\begin{subequations}
\begin{align}
\label{eq:ab}[A,B]&=B,\\ \label{eq:ac}[A,C]&=-C,\\
\label{eq:bc}[B,C]&=\Lambda_--\Lambda_+.
\end{align}
\end{subequations}
The algebra of order \(4p+2q-1\) if \(\kappa\neq 0\) and of order \(2p+2q-1\) if \(\kappa=0\). The enveloping algebra of \(A,B,C,H\), regarded as purely abstract operators, joined by the relations \eqref{eq:ab}, \eqref{eq:ac}, \eqref{eq:bc} with \(H\) central, satisfies the PBW property. Furthermore, the operator
\[D=\{B,C\}-\Lambda_+-\Lambda_-\]
commutes with \(A,B,C\). The differential operator realization annihilates \(D\) so that the centre of the algebra is \(\mbb{R}[H]\). 

The realisation as a deformed oscillator algebra is straightforward:
\begin{align*}
A&=\mfk{N}+u,\\
B&=\mathfrak{b}^\dagger+\mathfrak{b},\\
C&=\mathfrak{b}^\dagger-\mathfrak{b},\\
\Xi(\mathfrak{N})&=\Phi^{-q}(2p[\mathfrak{N}+u])\Psi^p(-2p[\mathfrak{N}+u],E),\\
\Xi(\mathfrak{N}+1)&=\Phi^q(2p[\mathfrak{N}+u])\Psi^p(2p[\mathfrak{N}+u],E).
\end{align*}
where \(u\) is an arbitrary constant. 
\subsection{PVZ q odd}\label{section2}
For \(q\) odd, \(2p+q-1\geq N(2p+q-2)\) for all \(N>1\). So as before, we consider
\begin{align*}
I&=\phi_0(M,H)M^{-1}(M^2-p^2)^{-1}+J^qK^p\phi_1(M,H)M^{-1}(M+p)^{-1}\\
&\qquad+J^{-q}K^{-p}\phi_{-1}(M,H)M^{-1}(M-p)^{-1}.
\end{align*}
By a similar argument to \S 5.1, \(\phi_0,\phi_{\pm 1}\) must divide \(M\). Lemma 4.1 guarantees that may take \(\phi_{\pm 1}\) to be any real number proportion of \(M\). Therefore, \(H\) is a superintegrable Hamiltonian of order \(2p+q-1\). If we write
\begin{align*}
X&=[J^qK^p-J^q_{0,0}(-p)K^p_{0,0}(-p,H)](M+p)^{-1},\\
Y&=[J^{-q}K^{-p}-J^{-q}_{0,0}(p)K^p_{0,0}(-p,H)](M-p)^{-1},
\end{align*}
then \(X,Y,M,H\) together generate all the integrals of \(H\). 
The Hamiltonian is therefore super-integrable of order \(2p+q-1\). We take
\begin{align*}
A&=\frac{M}{2p}; & B&=p(Y+X); & C&=p(Y-X)
\end{align*}
and
\[\Lambda_\pm=\frac{\Phi^{\pm q}(2pA)\Psi^p(\pm 2pA,H)-[J^{\pm q}_{0,0}(\mp p)K^p_{0,0}(-p,H)]^2}{(2A\pm 1)^2}.\]
The algebra relations are
\begin{align*}
\{A,B\}&=C-[J^q_{0,0}(-p)+J^{-q}_{0,0}(p)]K^p_{0,0}(-p,H),\\
\{A,C\}&=B+[J^q_{0,0}(-p)-J^{-q}_{0,0}(p)]K^p_{0,0}(-p,H),\\
\{B,C\}&=2\Lambda_+-2\Lambda_-.
\end{align*}
This is a symmetry algebra of order \(4p+2q-3\) if \(\kappa\neq 0\) and of order \(2p+2q-3\) if \(\kappa=0\). We have the Casimir operator of the enveloping algebra
\begin{align*}
D&=B^2+C^2-2\Lambda_+-2\Lambda_-.
\end{align*}
The element \(D\) evaluates to zero in the differential-difference operator realisation.

When \(k=1\), we have
    \begin{align*}
    \{B,C\}&=-2\alpha\beta(4\omega^2-\kappa^2+4\kappa H)+4A[4\omega^2+4\kappa H+\kappa^2(\alpha^2+\beta^2-2)]\\
    &\qquad+8\kappa^2\alpha\beta A^2-32\kappa^2A^3.
    \end{align*}
For \(\kappa=0\), \(\Lambda_+-\Lambda_-\) is of first-order, and the system generated by \(A,B,C\) is a Bannai-Ito algebra over the ring \(\mbb{R}[H]\):
\begin{align*}
\{A,B\}&=C-2\alpha H;\\
\{A,C\}&=B+2\beta H;\\
\{B,C\}&=2(4\omega^2-\kappa^2)(2A-\alpha\beta);\\
0&=(4\omega^2-\kappa^2)(4A^2+1-\alpha^2-\beta^2)+B^2+C^2-4H^2.
\end{align*}
In order to produce a deformed oscillator realization, we require a `fermionic' number operator in contrast to the `bosonic' operator \(\mathfrak{N}\) that alternates the sign of \(\mathfrak{b},\mathfrak{b}^\dagger\) for left- and right-multiplication. This is achieved if we take \((-1)^{\mathfrak{N}}=\cos(\pi\mathfrak{N})\), giving:
\begin{align*}
A&=-(\mathfrak{N}+u)\cos(\pi\mathfrak{N}),\\
B&=\mathfrak{b}[2(\mathfrak{N}+u)-1]^{-1}+[2(\mathfrak{N}+u)-1]^{-1}\mathfrak{b}^\dagger\\
&\qquad+K^p_{0,0}(-p,E)\left[\frac{J^q_{0,0}(-p)}{2(\mathfrak{N}+u)\cos(\pi\mathfrak{N})-1}-\frac{J^{-q}_{0,0}(p)}{2(\mathfrak{N}+u)\cos(\pi\mathfrak{N})+1}\right]\\
C&=\cos(\pi\mathfrak{N})\{\mathfrak{b}[2(\mathfrak{N}+u)-1]^{-1}-[2(\mathfrak{N}+u)-1]^{-1}\mathfrak{b}^\dagger\}\\
&\qquad-K^p_{0,0}(-p,E)\left[\frac{J^q_{0,0}(-p)}{2(\mathfrak{N}+u)\cos(\pi\mathfrak{N})-1}+\frac{J^{-q}_{0,0}(p)}{2(\mathfrak{N}+u)\cos(\pi\mathfrak{N})+1}\right],\\
\Xi(\mathfrak{N})&=\Phi^{-q\cos(\pi\mathfrak{N})}(-2p[\mathfrak{N}+u]\cos[\pi\mathfrak{N}])\Psi^p(-2p[\mathfrak{N}+u],E),\\
\Xi(\mathfrak{N}+1)&=\Phi^{q\cos(\pi\mathfrak{N})}(-2p[\mathfrak{N}+u]\cos[\pi\mathfrak{N}])\Psi^p(2p[\mathfrak{N}+u],E).
\end{align*}
\subsection{TTW on constant-curvature space}
For the TTW model, the operators
\begin{align*}
A&=\frac{M^2}{4p^2}-\frac{1}{4};\\
B&=p^2J^qK^pM^{-1}(M+p)^{-1}+p^2J^{-q}K^{-p}M^{-1}(M-p)^{-1}\\
&\qquad-2p^2J^q_{0,0}(-p)K^p_{0,0}(-p,H) (M^2-p^2)^{-1}\\
C&=pJ^qK^pM^{-1}-pJ^{-q}K^{-p}M^{-1};
\end{align*}
together with \(H\), generate all the integrals of this model and thus the full symmetry algebra. So \(H\) is superintegrable of order \(2p+2q-2\) (this being the order of \(B\)). The symmetry algebra relations are
\begin{align}
[A,B]&=C;\\
[A,C]&=2\{A,B\}+2J^q_{0,0}(-p)K^p_{0,0}(-p,H); \label{eq:[a,c]} \\
[B,C]&=-2B^2+\Lambda \label{eq:[b,c]}
\end{align}
where
\begin{equation}
\Lambda=\tfrac{2p^3[(M-p)^2\Phi^q(M)\Psi^p(M,H)-(M+p)^2\Phi^q(-M)\Psi^p(-M,H)+4pM\langle q;-p;0\rangle^2\langle\!\langle p;-p,H;0\rangle\!\rangle^2]}{M(M^2-p^2)^2}.
\end{equation}
is a polynomial in \(A,H\). This algebra is of order: \(2p+2q-2\) if \(\kappa\neq 0\); \(p+2q-2\) if \(\kappa =0,k\neq 1\); or \(2\) if \(\kappa=0,k=1\). There is a Casimir
\begin{equation}
D=C^2+4B^2-2\{A,B^2\}-4J^q_{0,0}(-p)K^p_{0,0}(-p,H) B+\Omega\label{eq:omega}
\end{equation}
where
\begin{equation}
\Omega=\tfrac{2p^2[(M-p)^3\Phi^q(M)\Psi^p(M,H)+(M+p)^3\Phi^q(-M)\Psi^p(-M,H)-2\Phi^q(-p,H)\Psi^p(-p,H)M(M^2+3p^2)]}{M(M^2-p^2)^2}
\end{equation}
is a polynomial in \(A,H\). In the differential operator realisation, \(D\equiv 0\). 

For \(k=1\), these polynomials evaluate to
\begin{align*}
\Lambda&=8H^2+16\sigma(\alpha^2+\beta^2-2-2A)+16\kappa H(\alpha^2+\beta^2-2-2A)\\
&\qquad+2\kappa^2[(\alpha^2-\beta^2)^2+12A^2-2(\alpha^2+\beta^2)(4A+3)+28A+12];\\
\Omega&=8 (\alpha^2 + \beta^2) (4\sigma A - H^2) - 
 4 (\alpha^2 - \beta^2)^2\sigma - 64\sigma A^2\\  
 &\qquad+4\kappa H[
   8 (\alpha^2 + \beta^2 - 2) A - (\alpha^2 - \beta^2)^2 - 16 A^2] \\&\qquad+ 
 4\kappa^2 A[
   16 A^2 + (\alpha^2 - \beta^2)^2 (A - 1) - 44 A + 48 - 
    2 (\alpha^2 + \beta^2) (4 A + 3)].
\end{align*}
If, in addition, \(\kappa=0\), then \(\Lambda\) is of first-order in \(A\) and we obtain a Racah algebra over the ring \(\mbb{R}[H]\):
\begin{align*}
[A,B]&=C,\\
[A,C]&=2\{A,B\}+2H(\alpha^2-\beta^2),\\
[B,C]&=-2B^2+8H^2+16\sigma (\alpha^2+\beta^2-2-2A),\\
0&=C^2+4\beta^2-2\{A,B^2\}-4H(\alpha^2-\beta^2)B\\&\qquad+8(\alpha^2+\beta^2)(4\sigma A-H^2)-4(\alpha^2-\beta^2)^2\sigma-64\sigma A^2.
\end{align*}
A deformed oscillator realization is readily given by
\begin{align*}
A&=(\mathfrak{N}+u)^2-\tfrac{1}{4};\\
B&=\mathfrak{b}^\dagger(\mathfrak{N}+u)^{-1}[2(\mfk{N}+u)+1]^{-1}+[2(\mfk{N}+u)+1]^{-1}(\mathfrak{N}+u)^{-1}\mfk{b}\\
&\qquad-2J^q_{0,0}(-p)K^p_{0,0}(-p,E) [4(\mfk{N}+u)^2-1]^{-1},\\
C&=\mfk{b}^\dagger(\mathfrak{N}+u)^{-1}-(\mathfrak{N}+u)^{-1}\mfk{b},\\
\Xi(\mathfrak{N})&=4\Phi^q(-2p[\mfk{N}+u])\Psi^p(-2p[\mfk{N}+u],E),\\
\Xi(\mathfrak{N}+1)&=4\Phi^q(2p[\mathfrak{N}+u])\Psi^p(2p[\mfk{N}+u],E).
\end{align*}
\subsection{Spectrum}
The deformed oscillator realizations of both models relate the roots of \(\Xi\) to those of \(\Phi^{\pm q},\Psi^p\). The roots of these structure functions supply constraints on \(n,u\) and allow us to determine the value of \(E\) corresponding to a \(n+1\)-dimensional representations.

Let us consider the case where \(\Phi^{-q}(2pu)=\Phi^q(2p[n+u])=0\) for the TTW model and the PVZ model with \(q\) even and \(\Phi^{-q}(2pu)=\Phi^{(-1)^nq}(2p[-1]^n[n+u])=0\) otherwise. Then \(E\) is free to take any value but at least one of \(\alpha,\beta,\alpha+\beta,\alpha-\beta\) must be an integer (an even integer, if PVZ). Second, we have \(\Psi^p(-2pu,E)=\Psi^p(2p[n+u],E)=0\). This yields the two equations:
\begin{align}
    0&=E^2-(\omega^2+\kappa E)(2pu-2i-1)^2+\tfrac{1}{4}\kappa^2(2pu-2i-1)^4,\label{eq:psi1}\\
    0&=E^2-(\omega^2+\kappa E)(2pn+2pu+2i'+1)^2+\tfrac{1}{4}\kappa^2(2pn+2pu+2i'+1)^4,\label{eq:psi2}
\end{align}
for some \(0\leq i,i'<p\). Eliminating \(E\) results in an equation of fourth degree in \(u\). There is first, the double root:
\[u_0=\frac{i-i'-pn}{2p}.\]
Setting \(u=u_0\) makes \eqref{eq:psi1} identical to \eqref{eq:psi2}. There are two corresponding values for \(E\), given by
\[E=\tfrac{1}{2}\kappa(pn+i+i'+1)^2\pm \omega (pn+i+i'+1).\]
There is another value eigenvalue:
\[E=\tfrac{1}{2}\kappa(pn+i+i'+1)^2-\frac{\omega^2}{2\kappa}\]
which only exists for non-zero curvature. This occurs when \(u\) satisfies the two other roots of the resultant:
\[u_\pm=u_0\pm\frac{\omega}{2p}.\]
Finally, the case where \(\Phi^{-q}(2pu)=\Psi^q(2p[n+u])=0\) or \(\Psi^p(-2pu,E)=0\) and \(\Phi^q(2p[n+u])=0\) for the TTW model and the PVZ model with \(q\) even and \(\Phi^{(-1)^nq}(2p[-1]^n[n+u])=0\) otherwise. This yields eight different sequences of eigenvalues:
\[E_{m,\ell}=\pm\omega \epsilon_{m,\ell}+\tfrac{1}{2}\kappa \epsilon^2_{m,\ell},\qquad \epsilon_{m,\ell}=2m+1+k(2\ell+\gamma+1)\]
where \(\gamma\in \{\alpha+\beta,\alpha-\beta,-\alpha+\beta,-\alpha-\beta\}\). The physical energies correspond to a choice of positive sign for \(\alpha,\beta,\omega\).
\section{Conclusion}
We have applied the expansion method to the TTW and PVZ model. This has proved successful on account of their separability, where we are able to obtain ODE in the radial variable with operator coefficients. The solution lead us to ladder operators connected with orthogonal polynomials which we computed and analyzed their properties in order to obtain the respective symmetry algebras. This shows how the expansion method can lead to the resolution of a range of superintegrable systems. The approach here has also proved more effective in analyzing the ladder operators as we have not needed to consider the action on a basis of eigenfunctions. Moreover, we have shown that the ladder operators do not give the generators directly but have given a method to reduce them down to basic operators. 

Other models in two dimensions, such as the Post-Winternitz systems, could be considered as well as various Darboux deformations which provide isospectral or almost isospectral problems, which lead to exceptional Jacobi polynomial of Type I, II and III \cite{oda11,ull14}. In addition, we may also consider the case where the angular part consists of exotic non-elementary functions, since explicit knowledge of the eigenfunctions is not readily obtainable \cite{tre10,rita,ritb}. 

We have also found the degenerate energy spectrum of the two Hamiltonians via the algebraic method. That is, the three generator polynomial algebra formed by the integrals and the deformed oscillator realization. It is not known in general if the spectrum can always be recovered in this way for superintegrable systems, even on two-dimensional spaces. This paper has given examples where this is the case for a symmetry algebra which may be of arbitrarily large order.

\section*{Acknowledgements}
The research of Ian Marquette was supported by Australian Research Council Future Fellowship FT180100099.

\printbibliography[title={References}]

\end{document}